\newtheorem{theorem}{Theorem}[section]
\newtheorem{lemma}[theorem]{Lemma}
\newtheorem{definition}[theorem]{Definition}
\newtheorem{proposition}[theorem]{Proposition}
\newtheorem{corollary}[theorem]{Corollary}
\newcommand{\bmb}{{\bm b}}
\newcommand{\bme}{{\bm e}}
\newcommand{\bmx}{{\bm x}}
\newcommand{\bmy}{{\bm y}}
\newcommand{\bmw}{{\bm w}}
\newcommand{\bmzero}{{\bm 0}}
\newcommand{\bmone}{{\bm 1}}
\newcommand{\bbN}{\mathbb{N}}
\newcommand{\bbR}{\mathbb{R}}
\newcommand{\bbS}{\mathbb{S}}
\newcommand{\caC}{\mathcal{C}}
\newcommand{\caD}{\mathcal{D}}
\newcommand{\caR}{\mathcal{R}}
\newcommand{\set}[1]{\{#1\}}
\newcommand{\size}{\mathrm{size}}
\newcommand{\E}{\mathop{\mathbf{E}}}
\newcommand{\argmin}{\mathop{\mathrm{argmin}}}
\newcommand{\argmax}{\mathop{\mathrm{argmax}}}
\newcommand{\diag}{\mathrm{diag}}
\DeclarePairedDelimiter\abs{\lvert}{\rvert}
\title{Spectral Sparsification of Hypergraphs}
\author{Tasuku Soma\\
The University of Tokyo\\
\texttt{tasuku\_soma@mist.i.u-tokyo.ac.jp}
\and
Yuichi Yoshida\thanks{Supported by JSPS KAKENHI Grant Number JP17H04676}\\
National Institute of Informatics\\
\texttt{yyoshida@nii.ac.jp}
}
\begin{document}

\maketitle
\begin{abstract}
  For an undirected/directed hypergraph $G=(V,E)$, its Laplacian $L_G\colon\mathbb{R}^V\to \mathbb{R}^V$ is defined such that its ``quadratic form'' ${\bm x}^\top L_G({\bm x})$ captures the cut information of $G$.
  In particular, $\bmone_S^\top L_G(\bmone_S)$ coincides with the cut size of $S \subseteq V$, where $\bmone_S \in \mathbb{R}^V$ is the characteristic vector of $S$.

  A weighted subgraph $H$ of a hypergraph $G$ on a vertex set $V$ is said to be an $\epsilon$-spectral sparsifier of $G$ if $(1-\epsilon){\bm x}^\top L_H({\bm x}) \leq {\bm x}^\top L_G({\bm x}) \leq (1+\epsilon){\bm x}^\top L_H({\bm x})$ holds for every ${\bm x} \in \mathbb{R}^V$.
  In this paper, we present a polynomial-time algorithm that, given an undirected/directed hypergraph $G$ on $n$ vertices, constructs an $\epsilon$-spectral sparsifier of $G$ with $O(n^3\log n/\epsilon^2)$ hyperedges/hyperarcs.

  The proposed spectral sparsification can be used to improve the time and space complexities of algorithms for solving problems that involve the quadratic form, such as computing the eigenvalues of $L_G$, computing the effective resistance between a pair of vertices in $G$, semi-supervised learning based on $L_G$, and cut problems on $G$.
  In addition, our sparsification result implies that any submodular function $f\colon 2^V \to \mathbb{R}_+$ with $f(\emptyset)=f(V)=0$ can be concisely represented by a directed hypergraph.
  Accordingly, we show that, for any distribution, we can properly and agnostically learn submodular functions $f\colon 2^V \to [0,1]$ with $f(\emptyset)=f(V)=0$, with $O(n^4\log (n/\epsilon) /\epsilon^4)$ samples.
\end{abstract}

\thispagestyle{empty}
\setcounter{page}{0}
\newpage


\section{Introduction}

Let $G=(V,E,\bmw)$ be a (weighted) graph, where $\bmw \in \bbR_+^E$ is a nonnegative weight function on edges.
The \emph{Laplacian} $L_G\in \bbR^{V\times V}$ of $G$ is defined as
\[
  L_G(u,v) = \begin{cases}
  \sum\limits_{e\in E: v \in e}\bmw(e) & \text{if }u=v, \\
  -\bmw(e) & \text{if }e = \{u,v\} \in E,\\
  0 & \text{otherwise}.
  \end{cases}
\]
A notable property of the Laplacian $L_G$ is that its quadratic form can be written as
\[
  \bmx^\top L_G\bmx = \sum_{\set{u,v} \in E}\bmw(e)\bigl(\bmx(u)-\bmx(v)\bigr)^2.
\]
This quadratic form captures many interesting properties of $G$.
We say that an edge $e \in E$ is \emph{cut} by a vertex set $S \subseteq V$ if $e \cap S \neq \emptyset, e \cap (V\setminus S) \neq \emptyset$.
Then, for any vertex set $S \subseteq V$, the quantity $\bmone_S^\top L_G\bmone_S$ matches the \emph{cut weight} of $S$, that is, the total weight of edges cut by $S$, where $\bmone_S \in \bbR^V$ is the characteristic vector of $S$.
In addition, the quadratic form plays an important role in computing the eigenvalues of $L_G$ because they can be obtained by minimizing the Rayleigh quotient $\caR_G(\bmx) := \bmx^\top L_G \bmx/\|\bmx\|_2^2$ subject to some orthogonality constraints.

To reduce the time and space complexities of algorithms for solving problems that involve the quadratic form, it is desirable to approximate the quadratic form of a large input graph using that of a much smaller graph.
For $\epsilon \in (0,1)$, we say that a weighted subgraph $H$ of a graph $G$ on a vertex set $V$ is an \emph{$\epsilon$-spectral sparsifier} of $G$ if
\begin{align}
  (1-\epsilon)\bmx^\top L_H \bmx
  \leq
  \bmx^\top L_G \bmx
  \leq
  (1+\epsilon)\bmx^\top L_H \bmx
  \label{eq:spectral-sparsification-of-graphs}
\end{align}
holds for every $\bmx \in \bbR^V$~\cite{Spielman:2011kia}.
If~\eqref{eq:spectral-sparsification-of-graphs} holds (only) for characteristic vectors, $H$ is called an \emph{$\epsilon$-cut sparsifier} because it preserves the cut weights.
Note that an $\epsilon$-spectral sparsifier is also an $\epsilon$-cut sparsifier.

Cut sparsification and spectral sparsification of graphs have been studied intensively~\cite{AllenZhu:2015cm,Batson:2014io,Jambulapati:2018va,Lee:2015ex,Lee:2017bu,Spielman:2011ir}, and it is known that any graph with $n$ vertices can be spectrally sparsified with $\widetilde{O}(n/ \epsilon)$ edges~\cite{Jambulapati:2018va} or $O(n/\epsilon^2)$ edges~\cite{Lee:2017bu}, where $\widetilde{O}(\cdot)$ hides a polylogarithmic factor.

Spectral sparsifiers have numerous applications in theoretical computer science.
First, as a spectral sparsifier $H$ of $G$ preserves the cut weights of $G$, it has been used to design efficient approximation algorithms related to cuts and flows~\cite{Benczur:1996fs,Benczur:2015bm,Karger:2002du,Madry:2010fn}.
Second, we can speed up the computation of the eigenvalues of $L_G$ because, as mentioned previously, they can be computed by minimizing the Rayleigh quotient $\caR_G(\bmx)$, which can be well approximated by $\caR_H(\bmx)$.
Finally, spectral sparsification is a key technical tool for realizing nearly linear time algorithms for solving a Laplacian system, a linear system in the form of $L_G\bmx = \bmb$ for some vector $\bmb \in \bbR^V$~\cite{Koutis2014,Spielman:2014gq}.
A fast Laplacian system solver yields to a fast symmetric diagonally dominant (SDD) system solver~\cite{Spielman:2014gq}.

\subsection{Our Contribution}
In this work, we consider spectral sparsification of hypergraphs, which could be undirected or directed.
Recently, the notion of the Laplacian $L_G\colon \bbR^V \to \bbR^V$\footnote{Precisely speaking, the range of $L_G$ is $2^{\bbR^V}$, that is, a set in $\bbR^V$.
However, we simply regard its range as $\bbR^V$ because (i) $L_G(\bmx)$ is a single point almost everywhere and (ii) $\bmx^\top \bmy$ has the same value for any $\bmy \in L_G(\bmx)$. } for an undirected hypergraph $G=(V,E,\bmw)$ was introduced~\cite{Louis:2015tg,Yoshida:2017zz} such that its ``quadratic form'' $\bmx^\top L_G(\bmx) $ satisfies
\begin{align}
  \bmx^\top L_G(\bmx) = \sum_{e \in E}\bmw(e)\max_{u,v \in e}\bigl(\bmx(u)-\bmx(v)\bigr)^2
  \label{eq:quadratic-form-hypergraph}
\end{align}
for every $\bmx \in \bbR^V$.
We note that $L_G$ is no longer a linear transformation.
As with graphs, this quadratic form captures many properties of $G$.
We say that a hyperedge $e \in E$ is \emph{cut} by a vertex set $S \subseteq V$ if $e \cap S \neq \emptyset, e \cap (V\setminus S) \neq \emptyset$.
Then, we can observe that $\bmone_S^\top L_G(\bmone_S)$ is equal to the cut weight of $S$.
This quadratic form has been used to approximate the eigenvalues of $L_G$~\cite{Yoshida:2017zz} as well as in semi-supervised learning~\cite{Zhang:2017va}.

As with graphs, for $\epsilon \in (0,1)$, we say that a weighted subhypergraph $H$ of an undirected hypergraph $G$ on a vertex set $V$ is an \emph{$\epsilon$-spectral sparsifier} of $G$ if
\begin{align}
  (1-\epsilon)\bmx^\top L_H(\bmx)
  \leq
  \bmx^\top L_G(\bmx)
  \leq
  (1+\epsilon)\bmx^\top L_H(\bmx)
  \label{eq:spectral-sparsification-of-hypergraphs}
\end{align}
holds for every $\bmx \in \bbR^V$.
If~\eqref{eq:spectral-sparsification-of-hypergraphs} holds (only) for characteristic vectors, $H$ is called an \emph{$\epsilon$-cut sparsifier} of $G$.
Although it is known that any undirected hypergraph on $n$ vertices admits cut sparsifiers of $\widetilde{O}(n^2/\epsilon^2)$ hyperedges~\cite{Kogan:2015by,Newman:2013ks}, to the best our knowledge, no spectral sparsification has been studied thus far.

For an undirected hypergraph $G=(V,E,\bmw)$, we define its \emph{size} as $\size(G)=\sum_{e \in E}|e|$.
In this work, we show that we can spectrally sparsify undirected hypergraphs with a polynomial number of hyperedges.
Note that an undirected hypergraph on $n$ vertices can have $\Omega(2^n)$ hyperedges.
\begin{theorem}\label{the:hypergraph}
  There exists a randomized algorithm that, given an undirected hypergraph $G=(V,E,\bmw)$ and $\epsilon \in (0,1)$, outputs an $\epsilon$-spectral sparsifier $H$ of $G$ with $O(n^3 \log n /\epsilon^2)$ hyperedges with probability at least $1-1/n$ in $O(pn + m\log(1/\epsilon^2)+n^3\log n/\epsilon^2 )$ time, where $n=|V|$, $m=|E|$, and $p=\size(G)$.
\end{theorem}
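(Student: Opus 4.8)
The plan is to build $H$ by \emph{importance sampling} of hyperedges, mimicking effective-resistance sampling for graphs but with the sampling probabilities adjusted to the fact that the quadratic form~\eqref{eq:quadratic-form-hypergraph} is a maximum rather than a sum. First I would pass to the \emph{clique expansion} $G'=(V,E',\bmw')$ of $G$, in which each hyperedge $e$ becomes the clique on $e$ with every pair $\{u,v\}\subseteq e$ given weight $\bmw(e)$. From $\max_{u,v\in e}(\bmx(u)-\bmx(v))^2\le\sum_{u,v\in e}(\bmx(u)-\bmx(v))^2\le\binom{|e|}{2}\max_{u,v\in e}(\bmx(u)-\bmx(v))^2$ one obtains the sandwich $\binom{n}{2}^{-1}\bmx^\top L_{G'}\bmx\le\bmx^\top L_G(\bmx)\le\bmx^\top L_{G'}\bmx$ for every $\bmx\in\bbR^V$. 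Since $G'$ is an ordinary graph it has effective resistances $R_{G'}(u,v)=(\bme_u-\bme_v)^\top L_{G'}^{+}(\bme_u-\bme_v)$, and constant-factor estimates of these can be obtained from a Johnson--Lindenstrauss sketch together with a few Laplacian solves, without ever materializing $L_{G'}$ (multiplication by $L_{G'}$ costs $O(p)$ and a coarse graph sparsifier of $G'$ has $\widetilde O(n)$ edges); this is what the $O(pn)$ term in the running time buys. For each hyperedge set the overestimate $\rho_e:=\bmw(e)\cdot\max_{u,v\in e}R_{G'}(u,v)$; using $(\bmx(u)-\bmx(v))^2\le R_{G'}(u,v)\,\bmx^\top L_{G'}\bmx$ and the graph identity $\sum_{\{u,v\}\in E'}\bmw'(u,v)R_{G'}(u,v)=n-1$ we get $\sum_e\rho_e\le\sum_e\sum_{u,v\in e}\bmw(e)R_{G'}(u,v)=n-1$. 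The algorithm then draws $N=O(n^3\log n/\epsilon^2)$ hyperedges i.i.d.\ with probability $p_e\propto\rho_e$ and adds $e$ to $H$ with weight $\bmw(e)/(Np_e)$ on each draw.

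Because~\eqref{eq:quadratic-form-hypergraph} is a sum over hyperedges of terms preserved in expectation by the reweighting, $\E\,\bmx^\top L_H(\bmx)=\bmx^\top L_G(\bmx)$ for every fixed $\bmx$, so the whole content of the theorem is a \emph{uniform} concentration statement over all $\bmx\in\bbR^V$. The observation that makes this tractable is that for a fixed $\bmx$ the pair attaining $\max_{u,v\in e}(\bmx(u)-\bmx(v))^2$ is exactly $(\argmax_{u\in e}\bmx(u),\argmin_{u\in e}\bmx(u))$, and hence the choice of active pair in \emph{every} hyperedge at once depends on $\bmx$ only through the sorted order of its coordinates. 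Thus $\bbR^V$ decomposes into at most $n!$ polyhedral cones $R_\pi$, one per permutation $\pi$, and on each $R_\pi$ both $\bmx^\top L_G(\bmx)=\bmx^\top M_\pi\bmx$ and $\bmx^\top L_H(\bmx)=\bmx^\top\widehat M_\pi\bmx$ coincide with genuine quadratic forms of the ordinary weighted graphs obtained by contracting each hyperedge to the single edge between its $\pi$-extreme vertices, with $M_\pi\preceq L_{G'}$ and $\widehat M_\pi$ a random reweighted subgraph of $M_\pi$. On a single cone I would apply a Chernoff/Bernstein-type bound to control the relative error of $\bmx^\top\widehat M_\pi\bmx$ uniformly over $\bmx\in R_\pi$: a careful analysis (using $\sum_e\rho_e\le n-1$ and the sandwich on $R_\pi$) shows that each sampled hyperedge perturbs $\bmx^\top M_\pi\bmx$ by at most a $\widetilde O(n^2)/N$ fraction, so one cone is handled with error probability $\delta$ using $\widetilde O(n^2\log(1/\delta)/\epsilon^2)$ samples; taking $\delta=1/(n\cdot n!)$ and union-bounding over the $\le n!$ cones yields $N=O(n^3\log n/\epsilon^2)$ and overall failure probability at most $1/n$. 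Finally, forming all $\rho_e$ costs $O(pn+m)$ after the solves, and drawing and reweighting $N$ hyperedges --- after bucketing the $m$ hyperedges by their sampling probabilities, which is where the $O(m\log(1/\epsilon^2))$ term enters --- costs $O(N)=O(n^3\log n/\epsilon^2)$.

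The step I expect to be the real obstacle is the uniform concentration on each cone with the correct power of $n$. The delicate point is that, although $\bmx^\top M_\pi\bmx$ is an honest quadratic form on $R_\pi$, the sampling distribution is tuned to $G'$ rather than to $M_\pi$: the leverage scores of $M_\pi$ can be as large as $1$ while $\rho_e$ can be tiny, so off-the-shelf matrix-Chernoff bounds --- which would require $\widehat M_\pi$ to approximate $M_\pi$ in Löwner order \emph{globally} --- are not directly available, and one must instead work with a concentration statement that controls $\bmx^\top\widehat M_\pi\bmx$ only for $\bmx$ \emph{in the cone}, where the bound $\binom{n}{2}^{-1}\bmx^\top L_{G'}\bmx\le\bmx^\top M_\pi\bmx$ can be exploited. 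Balancing the three resulting ingredients --- the per-sample perturbation bound, the $\le n!$ cones (which must cost only a $\log n$ factor in the exponent, so that a \emph{scalar} net over a cone, which would lose an extra factor of $n$, has to be avoided in favour of a spectral argument), and the target accuracy $\epsilon$ --- so as to land at $O(n^3\log n/\epsilon^2)$ rather than $O(n^4\log n/\epsilon^2)$ hyperedges is the technical heart of the proof.
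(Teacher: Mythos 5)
There is a genuine gap, and it sits exactly where you flag it: the uniform concentration over each cone is never established, and the quantitative claim you would need is not supported by your own ingredients. With your sampling rule ($p_e\propto\rho_e=\bmw(e)\max_{u,v\in e}R_{G'}(u,v)$, weight $\bmw(e)/(Np_e)$ per draw), the contribution of a single draw of $e$ to $\bmx^\top\widehat M_\pi\bmx$ is at most $\frac{\sum_{e'}\rho_{e'}}{N}\,\bmx^\top L_{G'}\bmx\le\frac{n-1}{N}\,\bmx^\top L_{G'}\bmx$, and to turn this into a relative bound against $\bmx^\top M_\pi\bmx$ you must pay the clique-expansion sandwich factor $\binom{n}{2}$, giving a per-sample relative perturbation of order $n^3/N$, not $\widetilde O(n^2)/N$. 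Even for a \emph{fixed} $\bmx$, Chernoff with failure probability $1/(n\cdot n!)$ then forces $N=O(n^4\log n/\epsilon^2)$, one factor of $n$ worse than the theorem; and a fixed-$\bmx$ bound is not what you need anyway, since the statement must hold uniformly over the cone and you explicitly rule out both a net (too lossy) and matrix Chernoff (the probabilities are tuned to $G'$, not to $M_\pi$). So the "technical heart" you defer is not a routine calculation; as sketched, the route either loses a factor of $n$ or has no mechanism for uniformity at all. (Bounded-rank refinements of the clique-expansion/effective-resistance route do exist, but they need substantially more machinery than a per-cone scalar Chernoff bound.)

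The paper closes this gap with a different mechanism, and with different sampling probabilities. It samples $e$ with $p_e=\bmw(e)/\min_{u,v\in e}d_G(u,v)$, where $d_G(u,v)$ is the total weight of hyperedges containing both $u$ and $v$ (no resistances, no clique expansion). For a cone $\bbR^V_\pi$ it proves a copositivity characterization (Lemma~\ref{lem:copositive-equivalence}): writing $L_{G_\pi}=P_\pi^\top J^\top B_\pi^\top W_G B_\pi J P_\pi$ and changing variables to $\bmy=JP_\pi\bmx\ge\bmzero$, the uniform-over-cone statement reduces to copositivity of $(1+\epsilon)B_\pi^\top W_HB_\pi-B_\pi^\top W_GB_\pi$ and $B_\pi^\top W_GB_\pi-(1-\epsilon)B_\pi^\top W_HB_\pi$, and for that it suffices that each of the $\binom{n}{2}$ entries $d_{H,\pi}(i,j)$ lie within $(1\pm\epsilon)d_{G,\pi}(i,j)$ — i.e.\ only $O(n^2)$ scalar Chernoff events per cone, so the union bound over all $n!$ cones costs only an $O(n\log n)$ factor in $K$. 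The choice $p_e\ge\bmw(e)/\min_{i,j\in_\pi e}d_{G_\pi,\pi}(i,j)$ makes every sample contribute at most a $1/K$ fraction of \emph{each} entry it touches, and $\max_\pi$ of this threshold is exactly $\bmw(e)/\min_{u,v\in e}d_G(u,v)$; finally a charging argument gives $\sum_ep_e=O(n^2)$, hence $O(n^3\log n/\epsilon^2)$ hyperedges. Your cone decomposition over the $n!$ orderings coincides with the paper's, but without an analogue of the copositivity/entrywise reduction (or some other uniformity device compatible with your resistance-based probabilities), the proposal does not yield the stated bound. Note also the paper's Appendix~\ref{sec:resistance}, which shows that the related rule $p_e=\max_\pi R_{G_\pi}(e_\pi)$ fails outright — a warning that resistance-driven probabilities are delicate in this setting.
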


A directed hypergraph $G=(V,E,\bmw)$ consists of a vertex set $V$, a set of hyperarcs $E$, and a weight function $\bmw \in \bbR_+^E$, where each hyperarc $e$ is a pair $(T_e,H_e)$ of (unnecessarily disjoint) sets of vertices, where $T_e,H_e \subseteq V$ are called the \emph{tail} and \emph{head}, respectively, of $e$.
This notion was first introduced in~\cite{Gallo:1993jc}, in which applications in propositional logic, dependency analysis in relational database, and traffic analysis were considered.
A directed hypergraph coincides with an undirected hypergraph when $T_e=H_e$ for every $e \in E$ and a directed graph when $|T_e|=|H_e|=1$ for every $e \in E$.

Using the framework of the submodular Laplacian~\cite{Yoshida:2017zz}, one can derive the Laplacian $L_G\colon \bbR^V \to \bbR^V$ for a directed hypergraph $G=(V,E,\bmw)$, and its ``quadratic form'' $\bmx^\top L_G(\bmx)$ can be written as
\[
  \bmx^\top L_G(\bmx) = \sum_{e=(T_e,H_e) \in E}\bmw(e) \max\limits_{u \in T_e}\max\limits_{v \in H_e}\bigl([\bmx(u)-\bmx(v)]^+\bigr)^2,
\]
where $[x]^+=\max\set{x,0}$.
We say that a hyperarc $e$ is \emph{cut} by a vertex set $S \subseteq V$ if $T_e \cap S \neq \emptyset$ and $H_e \cap (V\setminus S) \neq \emptyset$.
As with the undirected case, the quadratic form can represent cut weights and has applications in semi-supervised learning~\cite{Zhang:2017va}.
We define $\epsilon$-spectral and $\epsilon$-cut sparsifiers in the same manner as in the undirected case.

For a directed hypergraph $G=(V,E,\bmw)$, we define its \emph{size} as  as $\size(G)=\sum_{(T_e,H_e) \in E}(|T_e|+|H_e|)$.
We show that we can spectrally sparsify directed hypergraphs with a polynomial number of hyperarcs.
\begin{theorem}\label{the:directed-hypergraph}
  There exists a randomized algorithm that, given a weighted directed hypergraph $G=(V,E,\bmw)$ and $\epsilon \in (0,1)$, outputs an $\epsilon$-spectral sparsifier $H$ of $G$ with $O(n^3 \log n /\epsilon^2)$ hyperarcs with probability at least $1-1/n$ in $O(pn + m\log(1/\epsilon^2)+n^3\log n/\epsilon^2 )$ time, where $n=|V|$, $m=|E|$, and $p=\size(G)$.
\end{theorem}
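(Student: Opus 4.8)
The plan is to reduce the directed case to the undirected case, or more precisely to mimic the proof of Theorem~\ref{the:hypergraph} with the directed quadratic form $Q_e(\bmx) = \bigl(\max_{u \in T_e, v \in H_e}[\bmx(u)-\bmx(v)]^+\bigr)^2$ in place of $\max_{u,v\in e}(\bmx(u)-\bmx(v))^2$. The overall strategy is importance sampling: assign to each hyperarc $e$ a sampling probability $p_e$ proportional to an appropriate upper bound on its ``relative contribution'' $\sup_{\bmx}\bmw(e)Q_e(\bmx)/\bmx^\top L_G(\bmx)$, sample hyperarcs independently (or with replacement) according to these probabilities, and reweight the survivors by $1/p_e$ so that the resulting $L_H$ is an unbiased estimator of $L_G$ pointwise. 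One then needs (i) a bound on $\sum_e p_e$, which controls the expected number of surviving hyperarcs, and (ii) a concentration argument showing that, with the chosen sample size, $\bmx^\top L_H(\bmx)$ is within a $(1\pm\epsilon)$ factor of $\bmx^\top L_G(\bmx)$ simultaneously for all $\bmx$.

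First I would set up the sampling weights. For graphs the right quantity is the effective resistance; for (undirected) hypergraphs one works with a surrogate. A natural candidate here is a ``hyperarc leverage'' defined by choosing, for each hyperarc $e=(T_e,H_e)$, a representative pair $(u_e,v_e)\in T_e\times H_e$ and using the effective resistance of the directed edge $u_e\to v_e$ in an auxiliary graph obtained by replacing each hyperarc by (a clique on, or a star spanning) $T_e\cup H_e$; concretely one can bound $Q_e(\bmx)$ from below by the contribution of a single representative arc and from above by $|T_e|\cdot|H_e|$ or by $\size$-type quantities, so that the ratio between the hypergraph form and the auxiliary graph form is at most $\mathrm{poly}(n)$. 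Summing leverage scores over the auxiliary graph gives $\sum_e p_e = O(n)$ up to the $\mathrm{poly}(n)$ discrepancy factor; tracking the exponents carefully should yield the $O(n^3\log n/\epsilon^2)$ bound once the $\log n$ from a union bound and the $1/\epsilon^2$ from Chernoff are included.

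Next I would prove the spectral guarantee. Since $L_H$ is not linear, one cannot invoke matrix concentration directly on $L_G$; instead I would discretize. Fix a net: it suffices to control $\bmx^\top L_H(\bmx)$ on a finite $\delta$-net $N$ of the sphere (or of the relevant quotient space $\bbR^V/\bmone$), because both quadratic forms are Lipschitz in $\bmx$ with a $\mathrm{poly}(n,\bmw)$-bounded constant, so a $(1\pm\epsilon/2)$-bound on $N$ upgrades to $(1\pm\epsilon)$ everywhere after rescaling $\delta$. For each fixed $\bmx\in N$, $\bmx^\top L_H(\bmx)=\sum_{i}\frac{\bmw(e_i)}{k\,p_{e_i}}Q_{e_i}(\bmx)$ is a sum of $k$ independent bounded nonnegative random variables with mean $\bmx^\top L_G(\bmx)$, and the boundedness follows precisely from the choice $p_e \propto (\text{upper bound on }Q_e(\bmx)/\bmx^\top L_G(\bmx))$; a standard Chernoff/Bernstein bound then gives failure probability $\exp(-\Omega(\epsilon^2 k / n^{c}))$ for a single $\bmx$, and a union bound over $|N| = n^{O(n)}$ points forces $k = O(n^3\log n/\epsilon^2)$ (the $n^3$ absorbing the $\log|N| = O(n\log n)$ together with the $\mathrm{poly}(n)$ leverage-discrepancy factor). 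The reweighting ensures $H$ is a genuine weighted subhypergraph of $G$.

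The hard part, and the place where the directed case genuinely differs from the undirected one, will be establishing a clean polynomial bound on the ratio $\sup_{\bmx}\bmw(e)Q_e(\bmx)/\bmx^\top L_G(\bmx)$ in terms of an effective-resistance-like quantity that sums to $\mathrm{poly}(n)$: the one-sided $[\,\cdot\,]^+$ and the asymmetry between $T_e$ and $H_e$ mean the usual symmetric effective-resistance machinery does not apply verbatim, and one must argue that replacing each hyperarc by a directed bipartite gadget (arcs from every $u\in T_e$ to every $v\in H_e$) both dominates and is dominated, up to $\mathrm{poly}(n)$, by the true directed hypergraph form. I would handle this by (a) lower-bounding $Q_e(\bmx)$ using the single worst arc in the gadget and (b) upper-bounding $\bmx^\top L_G(\bmx)$ from below by a constant fraction of the corresponding graph Laplacian form via an averaging argument over the $\le |T_e||H_e| \le n^2$ arcs in each gadget, so that the discrepancy is $O(n^2)$ per hyperarc and $O(n^3)$ after summing leverages — exactly matching the stated bound. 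Everything else (running time: $O(pn)$ to build the gadget graph and compute approximate leverage scores, $O(m\log(1/\epsilon^2))$ to sample, $O(n^3\log n/\epsilon^2)$ to write the output) parallels the undirected proof and should require only routine bookkeeping.
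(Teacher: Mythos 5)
Your plan hinges on step (b): lower-bounding the directed hypergraph form $\bmx^\top L_G(\bmx)$ by a $1/\mathrm{poly}(n)$ fraction of an \emph{undirected} gadget-graph Laplacian form, so that orientation-blind leverage/effective-resistance scores (with an $O(n^2)$ per-hyperarc slack) control the relative contribution of every hyperarc for every $\bmx$. This step is false, and not just technically: because of the one-sided $[\cdot]^+$, for a given $\bmx$ an arbitrary subset of hyperarcs can be completely inactive while their gadget edges still contribute, so the gadget form can exceed the hypergraph form by an unbounded factor (already a single arc $(\{u\},\{v\})$ with $\bmx(v)>\bmx(u)$ gives ratio $\infty$). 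The cleanest way to see the approach cannot be repaired by a better per-hyperarc size factor is to specialize to ordinary directed graphs ($|T_e|=|H_e|=1$): there your gadget is the arc itself, the discrepancy factor is $1$, the leverage scores sum to $O(n)$, and your scheme would produce directed sparsifiers with $\widetilde{O}(n)$ arcs --- contradicting the $\Omega(n^2)$ lower bound for sparsifying directed graphs cited in the paper. Any correct sampling scheme must see orientation, which is exactly what the paper's probabilities $p_e = \bmw_G(e)/\min_{u\in T_e}\min_{v\in H_e} d_G(u,v)$ do (here $d_G(u,v)$ counts hyperarcs with $u$ in the tail and $v$ in the head), and their sum is $\Theta(n^2)$, not $O(n)$. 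Relatedly, the paper's Appendix already exhibits a hypergraph on which effective-resistance-based sampling retains everything, so resistance is the wrong currency here even in the undirected case.

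There are two further problems even setting the orientation issue aside. First, your parameter accounting does not reach the stated bound: with leverage sum $O(n)$, discrepancy $n^2$ entering the Chernoff exponent (or the probabilities) once, and $\log|N| = O(n\log n)$ from the net, you get $O(n^4\log n/\epsilon^2)$ hyperarcs, not $O(n^3\log n/\epsilon^2)$. Second, the $\epsilon$-net/Lipschitz upgrade from a net to all of $\bbR^V$ is delicate for a \emph{multiplicative} guarantee, since $\bmx^\top L_G(\bmx)$ can be arbitrarily small relative to $\|\bmx\|^2$ in near-kernel directions; an additive Lipschitz bound on the sphere does not give $(1\pm\epsilon)$ there. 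The paper avoids both nets and resistances: it partitions $\bbR^V$ into the $n!$ cones $\bbR^V_\pi$, observes that on each cone the hypergraph form equals the quadratic form of an ordinary graph $G_\pi$, and proves sparsification \emph{within each cone} via Lemma~\ref{lem:copositive-equivalence}, reducing the task to copositivity of $B_\pi^\top W_G B_\pi - (1-\epsilon)B_\pi^\top W_H B_\pi$ (and its counterpart), which follows from entrywise Chernoff bounds on the quantities $d_{H,\pi}(i,j)$ versus $d_{G,\pi}(i,j)$; the union bound over $n!$ orderings contributes the $\log(n!) = O(n\log n)$ factor in $K$, and Lemma~\ref{lem:directed-hypergraph-total-sum} ($\sum_e p_e = O(n^2)$, by a charging argument over ordered pairs) gives the $O(n^3\log n/\epsilon^2)$ count. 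Your time-complexity remarks would need to be redone around that scheme (computing the codegrees $d_G(u,v)$ in $O(pn)$ time and simulating the $K$ rounds by binomial sampling), but the main missing idea is the cone decomposition plus copositivity, which is what lets degree-type quantities replace effective resistances and makes the directed, one-sided form tractable.
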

We note that even cut sparsifiers were not known for directed hypergraphs.
This result is interesting because directed graphs, which could have $\Omega(n^2)$ arcs, do not admit non-trivial cut/spectral sparsification of size $o(n^2)$~\cite{Cohen:2017fy}, whereas directed hypergraphs, which could have $O(2^{2n})=O(4^n)$ hyperarcs, admit non-trivial cut/spectral sparsification of size $\widetilde{O}(n^3)$.

\subsection{Applications}

Our sparsification results can be obviously used to improve the time and space complexities of algorithms for solving problems that involve the quadratic forms of hypergraph Laplacians, such as minimizing/maximizing cut weight subject to some constraints, computing eigenvalues, solving Laplacian systems, and semi-supervised learning.

Moreover, sparsification of directed hypergraphs is useful for obtaining a concise representation of a submodular function.
A set function $f\colon 2^V \to \bbR$ is called \emph{submodular} if $f(S)+f(T) \geq f(S \cup T) + f(S \cap T)$ for every $S,T \subseteq V$.
The cut function $\kappa_G\colon 2^V \to \bbR_+$ of a hypergraph $G$ is a well-known example of submodular functions.
Fujishige and Patkar~\cite{Fujishige2001} showed that any nonnegative submodular set function $f\colon 2^V \to \bbR_+$ with $f(\emptyset)=f(V)=0$ can be represented as a cut function of some directed hypergraph.
We immediately obtain the following by Theorem~\ref{the:directed-hypergraph}:
\begin{corollary}[Sparse approximation of submodular functions]\label{cor:approximation}
  For any nonnegative submodular function $f\colon 2^V \to \bbR_+$ with $f(\emptyset)=f(V)=0$ and any $\epsilon \in (0,1)$, there exists a directed hypergraph $G=(V,E,\bmw)$ with $O(n^3\log{n}/\epsilon^3)$ hyperarcs for $n=|V|$ such that its cut function $\kappa_G\colon 2^V \to \bbR_+$ satisfies $(1-\epsilon)\kappa_G(S) \leq f(S) \leq (1+\epsilon)\kappa_G(S)$ for every $S \subseteq V$.
\end{corollary}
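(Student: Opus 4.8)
The plan is to combine the representation theorem of Fujishige and Patkar~\cite{Fujishige2001} with the sparsification guarantee of Theorem~\ref{the:directed-hypergraph}, with care taken on where the $\epsilon$-dependence $1/\epsilon^3$ comes from. First I would invoke Fujishige--Patkar to obtain a directed hypergraph $G_0=(V,E_0,\bmw_0)$ whose cut function $\kappa_{G_0}$ equals $f$ exactly on every $S \subseteq V$; here $E_0$ may have up to $\Omega(4^n)$ hyperarcs, so this step alone does not give a concise representation, but it does reduce the problem to sparsifying a directed hypergraph. Next I would apply Theorem~\ref{the:directed-hypergraph} to $G_0$ with error parameter $\epsilon/3$ (or any constant fraction of $\epsilon$), obtaining a weighted directed hypergraph $G=(V,E,\bmw)$ with $O(n^3 \log n / (\epsilon/3)^2) = O(n^3 \log n/\epsilon^2)$ hyperarcs such that
\[
  (1-\epsilon/3)\,\bmx^\top L_G(\bmx) \leq \bmx^\top L_{G_0}(\bmx) \leq (1+\epsilon/3)\,\bmx^\top L_G(\bmx)
\]
for all $\bmx \in \bbR^V$. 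Specializing $\bmx = \bmone_S$ and using that $\bmone_S^\top L_H(\bmone_S) = \kappa_H(S)$ for any directed hypergraph $H$, this yields $(1-\epsilon/3)\kappa_G(S) \leq \kappa_{G_0}(S) = f(S) \leq (1+\epsilon/3)\kappa_G(S)$ for every $S$, which already gives the claimed two-sided bound with $\epsilon/3$ in place of $\epsilon$ and only $O(n^3 \log n/\epsilon^2)$ hyperarcs.

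The mismatch between the $1/\epsilon^2$ I just derived and the $1/\epsilon^3$ appearing in the statement suggests that the intended argument is slightly different — most likely one does not sparsify $G_0$ directly but first needs a preliminary step to control the number of \emph{distinct weight values} or the aggregate weight, incurring an extra $1/\epsilon$ factor. So the honest plan is: (i) if sampling is done against a coarsened version of $G_0$ where hyperarc weights are first rounded to powers of $(1+\epsilon)$ and buckets that contribute negligibly (relative to the total, scaled by $1/n$ or similar) are discarded, one loses another $O(\epsilon)$ additive slack per bucket and there are $O(\log_{1+\epsilon}(\cdot)) = O(\epsilon^{-1}\log(\cdot))$ buckets — but this is not obviously where $1/\epsilon^3$ enters either. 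The cleanest reading is that Corollary~\ref{cor:approximation} is simply stated with a loose exponent, and the $O(n^3\log n/\epsilon^3)$ bound follows a fortiori from the $O(n^3\log n/\epsilon^2)$ bound I derived above (since $\epsilon<1$); in the write-up I would state it with whatever exponent the cited sparsification routine actually produces when its guarantee is phrased multiplicatively as in Corollary~\ref{cor:approximation}, and note that the $1/\epsilon^2$ version also holds.

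The only genuinely nontrivial ingredient is the Fujishige--Patkar representation, which I would not reprove: given submodular $f \geq 0$ with $f(\emptyset)=f(V)=0$, one exhibits hyperarcs $(T_e,H_e)$ and nonnegative weights so that the directed-cut function reproduces $f$; this is where submodularity (and the boundary conditions $f(\emptyset)=f(V)=0$, which make the cut-function normalization consistent) is essential, since an arbitrary set function cannot be a directed-hypergraph cut function. The main obstacle in presenting the corollary is therefore purely bookkeeping: verifying that $\bmone_S^\top L_G(\bmone_S)=\kappa_G(S)$ (immediate from the stated quadratic form, since for $\bmx=\bmone_S$ the inner expression $([\bmone_S(u)-\bmone_S(v)]^+)^2$ is $1$ exactly when $u\in T_e\cap S$ and $v\in H_e\setminus S$, i.e. exactly when $e$ is cut by $S$), and tracking the $\epsilon$ rescaling through Theorem~\ref{the:directed-hypergraph}. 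I expect no part of this to require new technical work beyond those two cited results.
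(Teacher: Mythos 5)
Your proposal is correct and follows the paper's own (implicit) argument exactly: the paper derives Corollary~\ref{cor:approximation} directly from the Fujishige--Patkar exact representation of $f$ as a directed-hypergraph cut function combined with Theorem~\ref{the:directed-hypergraph} applied at characteristic vectors, just as you do. Your observation about the exponent is also right — the sparsification actually gives $O(n^3\log n/\epsilon^2)$ hyperarcs (the bound the paper itself uses later as $M$ in the proof of Corollary~\ref{cor:agnostic-learning}), so the $1/\epsilon^3$ in the statement is simply a weaker bound that follows a fortiori since $\epsilon<1$.
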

We note that we need $\Omega(2^n)$ hyperarcs in general to exactly represent the original function~\cite{Fujishige2001}.
Corollary~\ref{cor:approximation} implies that a slight approximation allows us to reduce the number of hyperarcs to a polynomial number in $n$.

As an application of Corollary~\ref{cor:approximation}, we show that the class of nonnegative submodular functions $f\colon 2^V \to [0,1]$ with $f(\emptyset)=f(V)=0$ is properly and agnostically learnable with a small number of samples:
\begin{corollary}\label{cor:agnostic-learning}
  Let $\caC$ be the class of submodular functions $f\colon 2^V \to [0,1]$ with $f(\emptyset)=f(V)=0$ and let $\caD$ be any distribution on $2^V \times \bbR$.
  Then, for any $\epsilon > 0$ and $\delta \in (0,1)$, there exists an algorithm that, by drawing $O(n^4 \log (n/\epsilon)/\epsilon^4+\log(1/\delta)/\epsilon^2 )$ samples from $\caD$ for $n=|V|$, outputs a submodular function $h\colon 2^V \to \bbR_+$ that can be evaluated in polynomial time in $n$, and
  \[
    \E_{(S,b) \sim \caD}|h(S)- b| \leq \mathrm{opt}+\epsilon,
  \]
  holds with probability at least $1-\delta$,
  where $\mathrm{opt}=  \min\limits_{f \in \caC} \E\limits_{(S,b) \sim D}|f(S)-b|$.
\end{corollary}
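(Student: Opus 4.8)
The plan is to realize the learner as empirical risk minimization (ERM) over a hypothesis class $\caH$ of cut functions of directed hypergraphs with only $\mathrm{poly}(n,1/\epsilon)$ hyperarcs, and to control separately its approximation error (that $\caH$ contains a near-optimal hypothesis for every $f\in\caC$) and its estimation error (uniform convergence of empirical risks over $\caH$). Fix a sufficiently small constant $c>0$ and run the sparsification steps with accuracy $c\epsilon$.

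\emph{Approximation.} By the Fujishige--Patkar representation theorem (which underlies Corollary~\ref{cor:approximation}), every $f\in\caC$ equals the cut function $\kappa_{G_0}$ of some directed hypergraph $G_0$, possibly with $\Omega(2^n)$ hyperarcs. Applying Theorem~\ref{the:directed-hypergraph} to $G_0$ with accuracy $c\epsilon$, and recalling that $\bmone_S^\top L_G(\bmone_S)=\kappa_G(S)$ so that a spectral sparsifier is in particular a cut sparsifier, we obtain a directed hypergraph $G_f$ with $N=O(n^3\log n/\epsilon^2)$ hyperarcs satisfying $(1-c\epsilon)\kappa_{G_f}(S)\le f(S)\le(1+c\epsilon)\kappa_{G_f}(S)$ for all $S\subseteq V$; since $f\le 1$ this gives $\|\kappa_{G_f}-f\|_\infty=O(\epsilon)$ and $\|\kappa_{G_f}\|_\infty\le 2$. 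Discarding hyperarcs that no set cuts and rounding the remaining weights to the grid $\{0,\epsilon/N,2\epsilon/N,\dots\}\cap[0,3]$ changes the cut function pointwise by at most $\epsilon$. Let $\caH$ be the finite family of cut functions of directed hypergraphs with at most $N$ hyperarcs, weights on this grid, and sup-norm at most $3$. Then $\min_{h\in\caH}\|h-f\|_\infty=O(\epsilon)$ for every $f\in\caC$, so by the triangle inequality $\min_{h\in\caH}\E_{(S,b)\sim\caD}|h(S)-b|\le\mathrm{opt}+O(\epsilon)$. Each $h\in\caH$ is a cut function of a directed hypergraph, hence submodular, nonnegative, vanishing on $\emptyset$ and $V$, and evaluable in $O(Nn)=\mathrm{poly}(n,1/\epsilon)$ time.

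\emph{Estimation.} A hyperarc is one of at most $4^n$ pairs $(T_e,H_e)$ and its weight is one of $O(N/\epsilon)$ grid values, so $\log|\caH|\le N\bigl(2n\log 2+O(\log(N/\epsilon))\bigr)=\widetilde{O}(Nn)=\widetilde{O}(n^4/\epsilon^2)$. To cope with the possibly unbounded labels $b$, we pass to relative losses: for $h_0\equiv 0\in\caH$ and any $h\in\caH$, the per-sample quantity $|h(S)-b|-|h_0(S)-b|$ lies in $[-3,3]$ since $\|h\|_\infty\le 3$, and its expectation is $\E_{(S,b)\sim\caD}|h(S)-b|-\E_{(S,b)\sim\caD}|h_0(S)-b|$. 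Hoeffding's inequality and a union bound over $\caH$ give that, with $m=O\bigl((\log|\caH|+\log(1/\delta))/\epsilon^2\bigr)=O\bigl(n^4\log(n/\epsilon)/\epsilon^4+\log(1/\delta)/\epsilon^2\bigr)$ samples, every empirical relative risk is within $\epsilon$ of its expectation with probability at least $1-\delta$. On that event the ERM solution $\hat h$ over $\caH$ (equivalently, the minimizer of the empirical relative risk) satisfies $\E_{(S,b)\sim\caD}|\hat h(S)-b|\le\min_{h\in\caH}\E_{(S,b)\sim\caD}|h(S)-b|+2\epsilon\le\mathrm{opt}+O(\epsilon)$, and we output $h=\hat h$; rescaling $\epsilon$ by a constant finishes the proof.

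\emph{Main obstacle.} The crux is the estimation step. To get the claimed $\epsilon^{-4}$ sample bound one must invoke the $O(n^3\log n/\epsilon^2)$-hyperarc \emph{cut} sparsification implicit in Theorem~\ref{the:directed-hypergraph} (a weaker, say $\epsilon^{-3}$, bound on the number of hyperarcs would cost an extra factor $\epsilon^{-1}$). One must also handle two boundedness subtleties simultaneously: the labels $b\in\bbR$ are unbounded (resolved by working with relative losses), and the members of $\caH$ must be capped at sup-norm $O(1)$ while still containing a good approximant of every $f\in\caC$ — which is exactly why $\kappa_{G_f}$, not $f$ itself, is the natural comparator, and why we cannot simply truncate hypotheses into $[0,1]$ (truncating a non-monotone submodular function at a constant need not stay submodular, whereas the unclipped $\hat h$ is already a submodular nonnegative function vanishing on $\emptyset$ and $V$). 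Finally, ERM over the exponentially large $\caH$ need not run in polynomial time, which is consistent with the statement, since it only requires the output hypothesis to be polynomial-time evaluable.
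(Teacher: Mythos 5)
Your proof is correct and follows essentially the same route as the paper: the Fujishige--Patkar representation, sparsification via Theorem~\ref{the:directed-hypergraph}, discretization of the weights to obtain a finite hypothesis class with $\log|\caH|=O(n^4\log(n/\epsilon)/\epsilon^2)$, and a Chernoff/Hoeffding-plus-union-bound ERM argument (the paper's Lemma~\ref{lem:agnostic-learning-trivial}). Your additional care with boundedness --- using relative losses to handle unbounded labels and capping hypothesis values at $3$ after discarding uncuttable hyperarcs, rather than relying on the paper's cruder $\bmw_H(e)\leq n^3$ weight bound --- is a tidier treatment of details the paper glosses over when invoking Lemma~\ref{lem:agnostic-learning-trivial}, which is stated only for $[0,1]$-valued classes.
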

This improves the previous best result~\cite{Cheraghchi:2012up}, which requires $n^{O(1/\epsilon^2 )} \log(1/\delta)$ samples, only works when the marginal distribution of $\caD$ over $2^V$ is a product distribution, and is improper, that is, it may output a function that is not submodular.

A drawback of the algorithm presented in Corollary~\ref{cor:agnostic-learning} is that it may require exponential time although the number of samples is a polynomial.
By contrast, the algorithm presented in~\cite{Cheraghchi:2012up} only requires $n^{O(1/\epsilon^2 )}  \log(1/\delta)$ time.
However, this is as expected, because agnostic learning of submodular functions $f\colon 2^V \to[0,1]$ with $f(\emptyset)=f(V)=0$ in $n^{o(1/\epsilon^{2/3})}$ time implies learning $k$-parities with noise in $n^{o(k)}$ time~\cite{feldman2013representation}\footnote{Although the original claim was about agnostic learning of \emph{monotone} submodular functions, it was achieved first by showing the same claim for non-monotone submodular functions with $f(\emptyset)=f(V)=0$ and using it.}, which is a long-standing open problem in learning theory.
Hence, Corollary~\ref{cor:agnostic-learning} reveals a gap between the sample complexity and the time complexity for learning submodular functions.


\subsection{Proof Ideas}

We now describe our proof ideas.
For simplicity, let us assume that the input hypergraph $G=(V,E,\bmw)$ is undirected and unweighted, that is, $\bmw(e)=1$ for every $e \in E$.
The arguments for the directed and weighted cases are more tedious but the ideas are similar.

First, we describe our construction of sparsifiers.
For vertices $u,v\in V$, let $d_G(u,v)$ be the number of hyperedges including both $u$ and $v$.
For a hyperedge $e \in E$, let $p_e = 1/\min_{u,v\in e}d_G(u,v)$.
Then, to construct an $\epsilon$-spectral sparsifier $H$ of $G$, for each hyperedge $e \in E$, we repeat the following process $K:=O(n\log n/\epsilon^2)$ times.
Add a copy of $e$ with weight $\frac{1}{Kp_e}$ to $H$ with probability $p_e$ and do nothing for the complement probability $1-p_e$.
For each $e \in E$, the expected weight of copies of $e$ in $H$ is exactly one, and the expected number of hyperedges in $H$ is $K\sum_{e \in E}p_e$.

The difficulty in the analysis arises because \eqref{eq:quadratic-form-hypergraph} involves a maximum operation, which makes it difficult to use linear algebraic tools.
A key observation for resolving this issue is that~\eqref{eq:quadratic-form-hypergraph} coincides with the quadratic form of the Laplacian associated with an undirected graph if we fix the ordering of elements in the vector $\bmx \in \bbR^V$.
To see this, assume that $V = \set{1,2,\ldots,n}$ and let $\pi$ be a permutation of $[n]$ such that $\bmx(\pi_1) \geq \dots \geq \bmx(\pi_n)$ (break ties arbitrary).
Let $G_\pi$ be the graph obtained by replacing each hyperedge $e \in E$ with an edge $\set{s_e,t_e}$, where $s_e = \argmin_{v \in e}\pi^{-1}_v$ and $t_e = \argmax_{v \in e}\pi^{-1}_v$.
In other words, $s_e$ and $t_e$ are the first and last vertices of $e$ in the ordering $\pi_1,\ldots,\pi_n$, respectively.
Then, we have
\[
  \bmx^\top L_G(\bmx) = \sum_{e \in E}\max_{u,v \in e}\bigl(\bmx(u)-\bmx(v)\bigr)^2 = \sum_{e\in E}\bigl(\bmx(s_e)-\bmx(t_e))^2 = \bmx^\top L_{G_\pi}\bmx.
\]
This means that, if $H_\pi$ is an $\epsilon$-spectral sparsifier of $G_\pi$ for \emph{every} permutation $\pi$, then $H$ is an $\epsilon$-spectral sparsifier of $G$.
It is known that, if we sample each edge $e$ of $G_\pi$ with a probability no less than a quantity called the effective resistance of $e$ and repeat this process $O(\log n/\epsilon^2)$ times, we get an $\epsilon$-spectral sparsifier of $G_\pi$ with high probability~\cite{Spielman:2011ir}.
Hence, a naive approach is sampling each hyperedge $e$ with a probability no less than the maximum effective resistance of the corresponding edges of $e$ in $G_\pi$ over the choice of $\pi$ and repeating this process $O(\log n/\epsilon^2)$ times.
However, as discussed in Appendix~\ref{sec:resistance}, this strategy may leave a hypergraph with an exponential number of hyperedges.

A crucial observation for addressing the above-mentioned issue is that, when constructing an $\epsilon$-spectral sparsifier of $G_\pi$, we need to consider vectors only in $\bbR^V_\pi := \set{\bmx \in \bbR^V \mid \bmx(\pi_1) \geq \cdots \geq \bmx(\pi_n)}$, that is, we only need to guarantee that

\[
  (1-\epsilon)\bmx^\top L_{H_\pi}\bmx \leq \bmx^\top L_{G_\pi}\bmx \leq (1+\epsilon)\bmx^\top L_{H_\pi}\bmx
\]
for any vector $\bmx \in \bbR_\pi^V$.
In what follows, we only discuss how to guarantee $(1-\epsilon)\bmx^\top L_{H_\pi}\bmx \leq \bmx^\top L_{G_\pi}\bmx$ as the other direction is similar.

Suppose that we have constructed a hypergraph $H$ based on $d_G(u,v)$ as mentioned above (that is, we set $p_e = 1/\min_{u,v \in e}d_G(u,v)$).
To show that $(1-\epsilon)\bmx^\top L_{H_\pi}\bmx \leq \bmx^\top L_{G_\pi}\bmx$ for $\bmx \in \bbR^V_\pi$, we re-parameterize $\bmx$ using the $(n-1)$-dimensional vector $\bmy = (\bmx(\pi_1)-\bmx(\pi_2),\ldots,\bmx(\pi_{n-1})-\bmx(\pi_n))$.
Let $Q_\pi \in \bbR^{(n-1) \times n}$ be the matrix such that $\bmy = Q_\pi \bmx$.
Then, it is possible to represent $L_{G_\pi} = Q_\pi^\top B_\pi^\top B_\pi Q_\pi$ and $L_{H_\pi} = Q_\pi^\top B_\pi^\top W_H B_\pi Q_\pi$ for some matrices $B_\pi \in \bbR^{E \times (n-1)}$ and $W_H \in \bbR^{E \times E}$.
Here, $B_\pi$ is a $\set{0,1}$-valued matrix with each row having an interval of consecutive ones and $W_H$ is a diagonal matrix whose $(e,e)$-th element is the total weight of copies of $e$ in $H$.
Then, it suffices to show that  $(1-\epsilon)\bmy^\top B_\pi^\top W_H B_\pi  \bmy \leq \bmy^\top B_\pi^\top B_\pi \bmy $ for every $\bmy \in \bbR_+^{n-1}$, that is, $B_\pi^\top B_\pi - (1-\epsilon)B_\pi^\top W_H B_\pi$ is \emph{copositive}.
To show that the matrix is copositive, we show by using a concentration bound that, as long as $p_e$  is no less than some threshold $p_e^\pi$, we have $B_\pi^\top B_\pi - (1-\epsilon)B_\pi^\top W_H B_\pi \geq O$ with high probability, where $O \in \bbR^{(n-1)\times (n-1)}$ is the zero matrix.
Then, we show that $\max_\pi p_e^\pi = 1/\min_{u,v\in e}d_G(u,v)$.
This implies that, from our choice of $p_e$, the resulting hypergraph $H$ acts as an $\epsilon$-spectral sparsifier $H_\pi$ of $G_\pi$ for every permutation $\pi$ with high probability, which means that $H$ is an $\epsilon$-spectral sparsifier of $G$ with high probability.
As we can show that $\sum_{e \in E }p_e =\sum_{e \in E }1/\min_{u,v \in e}d_G(u,v)= O(n^2)$ holds, $H$ has $O(n^2 K) = O(n^3\log n/\epsilon^2)$ hyperedges on average.



\subsection{Related Work}
In this section $n$, $m$, and $p$ denote the number of vertices, number of edges, and size, respectively, of the input (hyper)graph.

The first general construction of cut sparsifiers for graphs is attributed to Bencz{\'u}r and Karger~\cite{Benczur:1996fs}.
They presented a polynomial-time algorithm that constructs an $\epsilon$-cut sparsifier with $O(n\log n/\epsilon^2)$ edges.
Their sparsifier is constructed by randomly sampling each edge with probability proportional to a parameter called \emph{edge strength}.
Fung~\textit{et~al.}~\cite{Fung:2011gp} showed that we can construct an $\epsilon$-cut sparsifier with $O(n\log n/\epsilon^2)$ edges by sampling each edge with probability proportional to (an approximation to) the edge-connectivity of its end points, reducing the time complexity to $\widetilde{O}(n/\epsilon^2+m)$.

Spielman and Teng~\cite{Spielman:2011kia} introduced the concept of a spectral sparsifier for graphs and proposed a construction of a spectral sparsifiers with $O(n\log^c n)$ edges for some constant $c>0$.
The number of edges was later improved to $O(n \log n/\epsilon^2)$ by sampling every edge with probability proportional to its effective resistance~\cite{Spielman:2011ir}.
Since then, various constructions of spectral sparsifier have been proposed~\cite{AllenZhu:2015cm,Batson:2014io,Lee:2015ex} and the current best methods use merely $O(n/\epsilon^2)$ edges and run in $\widetilde{O}(m/\epsilon^c)$ time for some $c>1$~\cite{Lee:2017bu} or use $\widetilde{O}(n/\epsilon)$ edges and run in $\widetilde{O}(m)$ time~\cite{Jambulapati:2018va}.

It was implicitly shown in~\cite{Newman:2013ks} that every undirected hypergraph admits an $\epsilon$-cut sparsifier with $\widetilde{O}(n^2 \log n/\epsilon^c)$ hyperedges for some $c > 1$.
Kogan and Krauthgamer~\cite{Kogan:2015by} generalized the method of Bencz{\'u}r and Karger~\cite{Benczur:1996fs} and showed that any $r$-uniform hypergraph admits an $\epsilon$-cut sparsifier of $O(n(r+\log n)/\epsilon^2)$ hyperedges.
Chekuri and Xu~\cite{Chekuri:2017td}, based on the idea of~\cite{Benczur:2015bm}, improved the running time to $O(p\log^2 n \log p+nr(r+\log n)/\epsilon^2)$, although the resulting cut sparsifier has $O(nr(r+\log n)/\epsilon^2)$ hyperedges, which is $r$ times greater than the size of~\cite{Kogan:2015by}.
Theorem~\ref{the:hypergraph} extends these results to spectral sparsifiers and Theorem~\ref{the:directed-hypergraph} further extends them to directed hypergraphs.



Several variants of Laplacians for hypergraphs have been proposed in the literature besides that the one used in this work.
In~\cite{Rodriguez2002}, the cut value of a set $S \subseteq V$ was defined as $\sum_{e \in E} \bmw(e)|S \cap e||e \setminus S|$.
De~Carli~Silva~\emph{et~al.}~\cite{Silva2015} considered a hypergraph Laplacian by extending this cut function and showed that every hypergraph admits a spectral sparsifier of $O(n/\epsilon^2)$ hyperedges in this sense.
Although they also provided a construction of cut sparsifiers for $r$-uniform hypergraphs with $O(n/\epsilon)$ edges using our definition of cut, their guarantee is merely $\frac{4(r-1)}{r^2} \bmone_S L_H(\bmone_S) \leq \bmone_S L_G(\bmone_S) \leq \frac{(1+\epsilon)r^2}{4(r-1)}\bmone_S L_H(\bmone_S)$ for every $S \subseteq V$, which is significantly weaker than the guarantee of $\epsilon$-cut sparsifiers.

Learning of submodular functions with additive error was first considered by Gupta~\emph{et~al.}~\cite{gupta2013privately}, who showed that we can learn $[0,1]$-valued submodular functions with $n^{O(1/\epsilon^2)}$ value queries and applied the result to private data release.
Cheraghchi~\emph{et~al.}~\cite{Cheraghchi:2012up} showed that submodular functions are noise stable, which leads to agnostic learning of $[0,1]$-valued submodular functions with $n^{O(1/\epsilon^2)}$ samples from a product distribution.
Subsequently, Feldman~\emph{et~al.}~\cite{feldman2013representation} showed that PAC learning of submodular functions can be done in $\mathrm{poly}(n)\cdot 2^{O(1/\epsilon^4)}$ time for any distribution and that it requires $2^{\Omega(1/\epsilon^{2/3})}$ samples (or even value queries).
As we mentioned previously, they also presented evidence that agnostic learning of submodular functions requires $n^{\Omega(1/\epsilon^{2/3})}$ time.
Raskhodnikova and Yaroslavtsev~\cite{raskhodnikova2013learning} considered learning submodular functions taking values in the range $\set{0,1,\ldots,k}$.
They built up on the approach of~\cite{gupta2013privately} to show that such submodular functions can be expressed as $2k$-DNF and then applied Mansour’s algorithm for learning DNF~\cite{mansour1995nlog} to obtain a $\mathrm{poly}(n) \cdot k^{O(k \log k/\epsilon)}$-time PAC learning algorithm using value queries.

\subsection{Organization}
The rest of this paper is organized as follows.
Section~\ref{sec:pre} introduces basic concepts for hypergraphs and technical tools.
In Section~\ref{sec:order}, we relate the sparsification for vectors in $\bbR_\pi^V$ for a permutation $\pi$ and the cone of copositive matrices.
Section~\ref{sec:graph} describes spectral sparsification of \emph{graphs} for vectors in $\bbR_\pi^V$.
Section~\ref{sec:hypergraph} describes and analyzes our spectral sparsification of hypergraphs.
Finally, Section~\ref{sec:application} discusses some applications of spectral sparsification of hypergraphs.


\section{Preliminaries}\label{sec:pre}

For a positive integer $n \in \bbN$, we denote the set $\set{1,2,\ldots,n}$ by $[n]$.
For a vector $\bmx \in \bbR^n$, we define $\diag(\bmx) \in \bbR^{n \times n}$ as the diagonal matrix whose $(i,i)$-th entry is $\bmx(i)$ for every $i \in [n]$.
For a probabilistic event $X$, let $[X] \in \set{0,1}$ be the indicator of $X$.

Let $G=(V,E,\bmw)$ be a hypergraph.
The \emph{size} of $G$, denoted by $\size(G)$, is $\sum_{e \in E}|e|$ if $G$ is undirected and is $\sum_{(T_e,H_e)\in E}|T_e|+|H_e|$ if $G$ is directed.
For a set of hyperedges/hyperarcs $F\subseteq E$, we denote the subgraph $(V,E\setminus F,\bmw|_{E \setminus F})$ by $G\setminus F$, where $\bmw|_{E \setminus F} \in \bbR_+^{E \setminus F}$ is the restriction of $\bmw$ to $E \setminus F$.


The Moore-Penrose inverse of a matrix $A$ is denoted by $A^\dagger$.
The sets of $n \times n$ symmetric matrices and positive semidefinite matrices are denoted by $\bbS^n$ and $\bbS_+^n$, respectively.
A symmetric matrix $A \in \bbS^{n}$ is called \emph{copositive} if $\bmx^\top A \bmx \geq 0$ for any $\bmx \in \bbR_+^n$.
For two symmetric matrices $A,B \in \bbS^n$, we write $A \succeq B$ if $A-B$ is positive semidefinite.
For a symmetric matrix $A \in \bbS^{n}$, $\lambda_{\min}(A)$ and $\lambda_{\max}(A)$ denote the smallest and largest eigenvalues of $A$, respectively.

A convex cone $C$ is said to be \emph{pointed} if it does not contain a line, or equivalently, $\bmx \in C$ and $-\bmx \in C$ implies $\bmx = \bmzero$.

\begin{lemma}[Chernoff's bound]\label{lem:chernoff}
  Let $X_1, \ldots, X_n$ be independent random variables bounded by the interval $[0, R]$.
  Then, for $X = X_1 + \cdots + X_n$ and $\epsilon\in(0,1)$, we have
  \[
    \Pr \bigl[|X- \mu|\geq \epsilon \mu \bigr]\leq 2\exp\left(-\frac{\epsilon^2 \mu}{3R}\right).
  \]
  where $\mu = \E[X]$.
\end{lemma}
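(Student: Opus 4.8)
The plan is to use the classical exponential-moment (Chernoff) method after rescaling each variable to the unit interval. First I would set $Y_i = X_i/R$, so that each $Y_i \in [0,1]$, the sum $Y := \sum_{i=1}^n Y_i = X/R$ has mean $\mu' := \mu/R$, and the event $\{|X-\mu| \ge \epsilon\mu\}$ coincides with $\{|Y-\mu'| \ge \epsilon\mu'\}$. By a union bound it then suffices to bound each of $\Pr[Y \ge (1+\epsilon)\mu']$ and $\Pr[Y \le (1-\epsilon)\mu']$ by $\exp(-\epsilon^2\mu'/3)$.

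For the upper tail, I would fix $t > 0$, apply Markov's inequality to $e^{tY}$, and use independence to get $\Pr[Y \ge (1+\epsilon)\mu'] \le e^{-t(1+\epsilon)\mu'}\prod_{i} \E[e^{tY_i}]$. Since $y \mapsto e^{ty}$ is convex on $[0,1]$, it lies below the chord through $(0,1)$ and $(1,e^t)$, so $e^{tY_i} \le 1 + (e^t-1)Y_i$ pointwise; taking expectations and using $1+z \le e^z$ gives $\E[e^{tY_i}] \le \exp((e^t-1)\E[Y_i])$, hence $\prod_i \E[e^{tY_i}] \le \exp((e^t-1)\mu')$. Substituting $t = \ln(1+\epsilon)$ collapses this to $\Pr[Y \ge (1+\epsilon)\mu'] \le \exp\!\big(-\mu'((1+\epsilon)\ln(1+\epsilon) - \epsilon)\big)$, and it remains only to invoke the elementary inequality $(1+\epsilon)\ln(1+\epsilon) - \epsilon \ge \epsilon^2/3$ valid for $\epsilon \in (0,1)$.

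The lower tail is handled symmetrically: applying Markov to $e^{-tY}$ for $t > 0$ together with the analogous chord bound $\E[e^{-tY_i}] \le \exp((e^{-t}-1)\E[Y_i])$ gives $\Pr[Y \le (1-\epsilon)\mu'] \le \exp\!\big(\mu'((e^{-t}-1) + t(1-\epsilon))\big)$; the choice $t = -\ln(1-\epsilon) > 0$ yields $\exp\!\big(-\mu'(\epsilon + (1-\epsilon)\ln(1-\epsilon))\big)$, and the scalar inequality $\epsilon + (1-\epsilon)\ln(1-\epsilon) \ge \epsilon^2/2 \ge \epsilon^2/3$ finishes this side (this tail is in fact slightly stronger than needed). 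Adding the two tail probabilities and rewriting $\mu' = \mu/R$ produces the claimed bound. The only mildly delicate points are the two one-variable inequalities on $(1\pm\epsilon)\ln(1\pm\epsilon)$, which I would verify by a short Taylor expansion or monotonicity argument; beyond those, the argument is entirely routine and presents no real obstacle, since this is a standard concentration estimate.
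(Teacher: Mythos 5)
Your proof is correct: the paper states this lemma as a standard fact and gives no proof of its own, and your rescaled exponential-moment argument is the canonical derivation. Both of your scalar inequalities check out on $(0,1)$ (for the upper tail, $f(\epsilon)=(1+\epsilon)\ln(1+\epsilon)-\epsilon-\epsilon^2/3$ satisfies $f(0)=0$ and $f'(\epsilon)=\ln(1+\epsilon)-2\epsilon/3\ge 0$ there, and the lower-tail inequality follows similarly), so nothing is missing.
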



Let $\pi$ be a permutation of $[n]$.
Then, let $\bbR^n_\pi$ be the set $\set{\bmx \in \bbR^n \mid \bmx(\pi_1) \geq \cdots \geq \bmx(\pi_n)}$ and let $P_\pi \in \bbR^{n\times n}$ be the permutation matrix associated with $\pi$, that is,
\[
  P_\pi(i,j) = \begin{cases}
    1 & \text{if }j = \pi_i, \\
    0 & \text{otherwise}.
  \end{cases}
\]


\section{Cone of $\bbR_+^\pi$-Copositive Matrices}\label{sec:order}
For a matrix $A \in \bbS^n$ and a permutation $\pi$, we say that $A$ is \emph{$\bbR_+^\pi$-copositive} if $\bmx^\top A \bmx \geq 0$ for any $\bmx \in \bbR_\pi^n$.
In this section, we characterize $\bbR_+^\pi$-copositive matrices $A \in \bbS^{n}$ with $A\bmone = \bmzero$ using copositive matrices.


\begin{proposition}
  Let $\pi$ be a permutation of $[n]$.
  The set $\caC$ of symmetric $\bbR_+^\pi$-copositive matrices is a proper cone, that is, a pointed, closed, and full-dimensional convex cone.
\end{proposition}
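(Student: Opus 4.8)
The plan is to verify the four defining properties — convex, cone, closed, full-dimensional, and pointed — by reducing everything to elementary facts about quadratic forms and the geometry of the cone $\bbR^n_\pi$ inside $\bbR^n$. Throughout I would use the representation
\[
  \caC = \bigcap_{\bmx \in \bbR^n_\pi}\bigl\{A \in \bbS^n : \bmx^\top A \bmx \geq 0\bigr\},
\]
which exhibits $\caC$ as an intersection of closed half-spaces of $\bbS^n$, one for each fixed $\bmx$, since the map $A \mapsto \bmx^\top A \bmx$ is linear (hence continuous) on $\bbS^n$. This immediately gives that $\caC$ is closed and convex, and closure under nonnegative scaling is equally immediate ($\bmx^\top(tA)\bmx = t\,\bmx^\top A\bmx \geq 0$ for $t \geq 0$), so $\caC$ is a closed convex cone containing $O$.

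For full-dimensionality I would exhibit an interior point. Every positive definite matrix $A$ lies in $\caC$, because $\bmx^\top A\bmx > 0 \geq 0$ for $\bmx \neq \bmzero$; concretely, if $E \in \bbS^n$ satisfies $\lambda_{\min}(E) > -1$ then $\bmx^\top(I+E)\bmx \geq \bigl(1+\lambda_{\min}(E)\bigr)\|\bmx\|_2^2 \geq 0$ for every $\bmx$, in particular for every $\bmx \in \bbR^n_\pi$. Since $\lambda_{\min}$ is continuous, $\{E \in \bbS^n : \lambda_{\min}(E) > -1\}$ is an open neighborhood of $O$, so $I$ has an open neighborhood contained in $\caC$; hence $\caC$ is full-dimensional.

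The one genuinely substantive point, and the step I expect to require the most care, is pointedness. Suppose $A \in \caC$ and $-A \in \caC$; then $\bmx^\top A\bmx = 0$ for every $\bmx \in \bbR^n_\pi$. The key observation is that $\bbR^n_\pi$ has nonempty interior in $\bbR^n$: its interior contains the nonempty open set $\{\bmx : \bmx(\pi_1) > \cdots > \bmx(\pi_n)\}$. The function $\bmx \mapsto \bmx^\top A\bmx$ is a polynomial in the coordinates of $\bmx$, and a real polynomial that vanishes on a nonempty open subset of $\bbR^n$ is identically zero; hence $\bmx^\top A\bmx = 0$ for all $\bmx \in \bbR^n$. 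For a symmetric matrix this forces $A = O$ — evaluating at $\bme_i$ gives the diagonal entries, and at $\bme_i + \bme_j$ the off-diagonal entries. Therefore $\caC$ contains no line, i.e.\ it is pointed, completing the proof that $\caC$ is a proper cone.

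If desired, the notation can be streamlined by first reducing to $\pi = \mathrm{id}$: since $(P_\pi\bmx)(i) = \bmx(\pi_i)$, we have $\bmx \in \bbR^n_\pi$ if and only if $P_\pi\bmx \in \bbR^n_{\mathrm{id}}$, so $A$ is $\bbR_+^\pi$-copositive if and only if $P_\pi A P_\pi^\top$ is $\bbR_+^{\mathrm{id}}$-copositive, and $A \mapsto P_\pi A P_\pi^\top$ is a linear automorphism of $\bbS^n$ preserving convexity, closedness, pointedness, and full-dimensionality. This reduction is optional, as the argument above applies verbatim for an arbitrary permutation $\pi$.
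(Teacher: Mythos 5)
Your proof is correct and follows essentially the same route as the paper: closedness and convexity from the definition (the paper calls them trivial), full-dimensionality because $\caC$ contains the positive semidefinite cone (your interior-point argument around $I$ is just a more explicit version of this), and pointedness from the fact that $\bbR^n_\pi$ is full-dimensional, so a quadratic form vanishing on it must vanish identically. Your polynomial-vanishing-on-an-open-set argument simply fills in the detail the paper compresses into ``Since $\dim \bbR^n_\pi = n$, $A = O$.''
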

\begin{proof}
    The closedness and convexity are trivial.
    The full-dimensionality follows since $\caC$ contains the positive semidefinite cone, which is full-dimensional.
    For the pointedness, assume that $A, -A \in \caC$.
    This implies that $\bmx^\top A \bmx = 0$ for any $\bmx\in\bbR^n_\pi$.
    Since $\dim \bbR^n_\pi = n$, $A = O$.
\end{proof}

For two symmetric matrices $A,B \in \bbS^{n}$ and a permutation $\pi$ of $[n]$, we define the relation $A \succeq_\pi B$ as $\bmx^\top (A-B)\bmx \geq 0$ for every $\bmx \in \bbR^n_\pi$.
Note that $A$ is $\bbR_+^\pi$-copositive if and only if $A \succeq_\pi O$.
It is well known that any proper cone $\caC$ induces a partial order $\succeq_\caC$ by defining $x \succeq_\caC y$ as $x - y \in \caC$ (see~\cite[Section~2.4]{Boyd2004convex}).
The following is immediate from this fact.
\begin{proposition}
  For any permutation $\pi$ of $[n]$, the relation $\succeq_\pi$ is a partial order on $\bbS^n$.
\end{proposition}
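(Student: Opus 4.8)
The plan is to recognize that this proposition is an immediate consequence of the previous one together with the general fact about proper cones quoted just above the statement. First I would recall that the preceding proposition establishes that the set $\caC$ of symmetric $\bbR_+^\pi$-copositive matrices is a proper cone (pointed, closed, and full-dimensional convex cone). Then I would observe that, by definition, $A \succeq_\pi B$ exactly means $\bmx^\top(A-B)\bmx \ge 0$ for all $\bmx \in \bbR^n_\pi$, i.e. that $A-B$ is $\bbR_+^\pi$-copositive, i.e. $A-B \in \caC$. Hence $\succeq_\pi$ is precisely the partial order $\succeq_{\caC}$ induced by the proper cone $\caC$, and the claim follows from the standard fact (\cite[Section~2.4]{Boyd2004convex}) that every proper cone induces a partial order.

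If one prefers a self-contained verification rather than a citation, I would check the three defining properties directly, reading each off a property of $\caC$. Reflexivity: $A \succeq_\pi A$ since the zero matrix $O$ is trivially $\bbR_+^\pi$-copositive, so $A - A \in \caC$. Transitivity: if $A - B \in \caC$ and $B - C \in \caC$, then because $\caC$ is a convex cone it is closed under addition, so $A - C = (A-B) + (B-C) \in \caC$, giving $A \succeq_\pi C$. Antisymmetry: if $A \succeq_\pi B$ and $B \succeq_\pi A$, then both $A - B$ and $-(A-B)$ lie in $\caC$, and pointedness of $\caC$ forces $A - B = O$, i.e. $A = B$; this is the one place where the content of the previous proposition (that a symmetric matrix vanishing as a quadratic form on the full-dimensional set $\bbR^n_\pi$ must be $O$) is genuinely used.

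I do not expect any obstacle here: each step is immediate from a property of $\caC$ already proved, and the only mild subtlety — that antisymmetry needs pointedness — has already been recorded. The proof is therefore essentially a one-line dereference of the proper-cone property established earlier.
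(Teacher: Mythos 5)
Your proposal is correct and matches the paper's argument: the paper likewise derives the proposition immediately from the fact that the proper cone of symmetric $\bbR_+^\pi$-copositive matrices induces a partial order via $A \succeq_\pi B \iff A - B \in \caC$, citing the same standard reference. Your optional explicit verification of reflexivity, transitivity, and antisymmetry (with pointedness used for the last) is just an unpacking of that same fact.
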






To characterize $\bbR_+^\pi$-copositive matrices, we introduce a matrix $J \in \bbR^{(n-1) \times n}$ defined as
\[
  J(i,j) = \begin{cases}
    1  & \text{if }j = i   \\
    -1 & \text{if }j = i+1 \\
    0  & \text{otherwise}.
  \end{cases}
  \quad \text{that is,} \quad
  J =
  \begin{bmatrix}
    1 & -1 &        &        &   \\
      & 1  & -1     &        &   \\
      &    & \ddots & \ddots &   \\
      &    &        &  1     & -1
  \end{bmatrix}.
\]
\begin{lemma}\label{lem:copositive-equivalence}
  Let $A \in \bbS^{n}$ be a matrix with $A\bmone=\bmzero$ and $\pi$ be a permutation of $[n]$.
  Then, $A$ is $\bbR_+^\pi$-copositive if and only if $A$ is of the form $P_\pi^\top J^\top C J P_\pi$ for some copositive matrix $C \in \bbR^{(n-1)\times (n-1)}$.
\end{lemma}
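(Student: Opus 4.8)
The plan is to reduce everything to the ordered case $\pi = \mathrm{id}$ and then characterize ordinary copositivity via the difference map. First I would observe that $P_\pi$ is a permutation matrix, so $\bmx \mapsto P_\pi \bmx$ maps $\bbR^n_\pi$ onto $\bbR^n_{\mathrm{id}} = \{\bmz \in \bbR^n : \bmz(1) \ge \cdots \ge \bmz(n)\}$ bijectively, and $A\bmone = \bmzero$ is preserved under conjugation by $P_\pi$ (since $P_\pi \bmone = \bmone$). Concretely, $A$ is $\bbR^\pi_+$-copositive with $A\bmone = \bmzero$ if and only if $\tilde A := P_\pi A P_\pi^\top$ is $\bbR^{\mathrm{id}}_+$-copositive with $\tilde A \bmone = \bmzero$; and $A = P_\pi^\top J^\top C J P_\pi$ iff $\tilde A = J^\top C J$ (using $P_\pi^\top = P_\pi^{-1}$). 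So it suffices to prove the statement for $\pi = \mathrm{id}$: a matrix $A \in \bbS^n$ with $A\bmone = \bmzero$ satisfies $\bmx^\top A \bmx \ge 0$ for all $\bmx$ with $\bmx(1) \ge \cdots \ge \bmx(n)$ if and only if $A = J^\top C J$ for some copositive $C \in \bbS^{n-1}$.

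Next I would set up the change of variables $\bmy = J\bmx \in \bbR^{n-1}$, which records consecutive differences $\bmy(i) = \bmx(i) - \bmx(i+1)$. The key linear-algebra facts are: $J$ has full row rank $n-1$, its kernel is exactly $\mathrm{span}(\bmone)$, and crucially $J$ maps $\bbR^n_{\mathrm{id}}$ onto $\bbR^{n-1}_+$ (and $\bbR^n_{\mathrm{id}} = J^{-1}(\bbR^{n-1}_+)$ up to the $\bmone$ direction — i.e. $\bmx \in \bbR^n_{\mathrm{id}}$ iff $J\bmx \ge \bmzero$). For the ``if'' direction, if $A = J^\top C J$ with $C$ copositive, then for $\bmx \in \bbR^n_{\mathrm{id}}$ we have $\bmx^\top A \bmx = (J\bmx)^\top C (J\bmx) \ge 0$ since $J\bmx \in \bbR^{n-1}_+$; and $A\bmone = J^\top C J \bmone = J^\top C \bmzero = \bmzero$. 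For the ``only if'' direction, suppose $A$ is $\bbR^{\mathrm{id}}_+$-copositive with $A\bmone = \bmzero$. Since $\ker J = \mathrm{span}(\bmone) \subseteq \ker A$ (because $A$ is symmetric with $A\bmone = \bmzero$, its row space is orthogonal to $\bmone$; more directly, any $\bmx$ in $\ker J$ is a multiple of $\bmone$ so $A\bmx = \bmzero$), the quadratic form $\bmx^\top A \bmx$ descends to a well-defined quadratic form on $\bbR^n / \mathrm{span}(\bmone) \cong \bbR^{n-1}$ via $J$. That is, there is a unique symmetric $C \in \bbS^{n-1}$ with $\bmx^\top A \bmx = (J\bmx)^\top C (J\bmx)$ for all $\bmx$, equivalently $A = J^\top C J$ — one can construct $C$ explicitly as $C = (J^\dagger)^\top A J^\dagger$ where $J^\dagger = J^\top(JJ^\top)^{-1}$ is the right pseudoinverse, and verify $J^\top C J = J^\top (J^\dagger)^\top A J^\dagger J = \Pi^\top A \Pi = A$ since $\Pi = J^\dagger J$ is the orthogonal projection onto $(\ker J)^\perp = \bmone^\perp$ and $A$ is supported there. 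Finally, $C$ is copositive: for any $\bmy \in \bbR^{n-1}_+$, pick $\bmx \in \bbR^n_{\mathrm{id}}$ with $J\bmx = \bmy$ (such $\bmx$ exists by surjectivity of $J$ onto $\bbR^{n-1}_+$), and then $\bmy^\top C \bmy = \bmx^\top A \bmx \ge 0$.

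I would then translate back through $P_\pi$: given the identity-case result, for general $\pi$ apply it to $\tilde A = P_\pi A P_\pi^\top$ to get a copositive $C$ with $\tilde A = J^\top C J$, hence $A = P_\pi^\top \tilde A P_\pi = P_\pi^\top J^\top C J P_\pi$, and conversely. This completes both directions.

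The main obstacle I anticipate is the ``only if'' direction — specifically, making rigorous that the quadratic form $\bmx^\top A \bmx$, known to be nonnegative only on the cone $\bbR^n_{\mathrm{id}}$ (not on all of $\bbR^n$), genuinely pushes forward to a quadratic form on $\bbR^{n-1}$ that is nonnegative on $\bbR^{n-1}_+$. The descent of the \emph{form} itself (existence of $C$ with $A = J^\top C J$) is pure linear algebra and uses only $\ker J \subseteq \ker A$; the subtle point is pairing the surjectivity $J(\bbR^n_{\mathrm{id}}) = \bbR^{n-1}_+$ with the copositivity transfer, and making sure no sign or cone-orientation issue arises (e.g. confirming $\bmy \ge \bmzero \Rightarrow$ the preimage can be taken monotone decreasing, which follows by setting $\bmx(n) = 0$, $\bmx(i) = \sum_{k=i}^{n-1}\bmy(k)$). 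Once the surjectivity-onto-the-positive-orthant claim is nailed down, the rest is routine.
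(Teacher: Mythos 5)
Your proposal is correct and takes essentially the same route as the paper: reduce to $\pi=\mathrm{id}$ via $P_\pi$, change variables through the consecutive-difference map $J$, and use $A\bmone=\bmzero$ to handle the kernel direction. The only difference is bookkeeping in the necessity direction — the paper conjugates by the explicit cumulative-sum bijection $E$ from $\bbR_+^n$ onto $\bbR^n_{\mathrm{id}}$ and reads $C$ off the top-left block of $E^\top A E$, whereas you construct $C=(J^\dagger)^\top A J^\dagger$ via the projection identity $A=\Pi A\Pi$ and then transfer copositivity using the partial-sum lift, which is the same map as $E$ in disguise.
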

\begin{proof}
    For simplicity, we show the lemma for $\pi = \mathrm{id}$.
    The general case is obtained by permuting variables with $P_\pi$.

    For the sufficiency, since for any $\bmx \in \bbR^n_\pi$, $\bmx' := J\bmx \in \bbR_+^{n-1}$, we obtain $\bmx^\top J^\top C J \bmx = \bmx' C \bmx' \geq 0$.

    For the necessity, let us consider an upper triangular matrix $E \in \bbR^{n \times n}$ defined as
    \[
        E =
        \begin{bmatrix}
            1 & \cdots & 1 \\
             & \ddots &  \vdots \\
              &        & 1
        \end{bmatrix}.
    \]
    Then $E$ is a bijection from $\bbR_+^n$ to $\bbR_\pi^n$.
    For $\bmx \in \bbR_\pi^n$, defining $\bmy \in \bbR^n_+$ such that $\bmx = E\bmy$, we obtain $\bmy E^\top A E \bmy = \bmx A \bmx \geq 0$.
    Therefore $E^\top A E$ is copositive.
    Since $A\bmone = \bmzero$ and the last column of $E$ is $\bmone$, we have
    \[
    E^\top AE =
    \begin{bmatrix}
        C & \bmzero \\
        \bmzero^\top & 0
    \end{bmatrix}
\]
    for some $(n-1)\times(n-1)$ copositive matrix $C$.
    From
    $
    E^{-1} = \left[ \begin{smallmatrix} J \\ \bme_n^\top \end{smallmatrix}\right]
    $, we conclude that
    \[
        A =
        \begin{bmatrix} J^\top & \bme_n \end{bmatrix}
    \begin{bmatrix} C & \bmzero \\ \bmzero^\top & 0 \end{bmatrix}
        \begin{bmatrix} J \\ \bme_n^\top \end{bmatrix}
        = J^\top C J. \qedhere
\]
\end{proof}


\section{Spectral Sparsification of Graphs for Vectors in $\bbR^n_\pi$}\label{sec:graph}
Let $G$ be a (weighted) graph on the vertex set $V:=[n]$ and $\pi$ be a permutation of $[n]$.
We say that a weighted subgraph $H$ of $G$ is an \emph{$\epsilon$-spectral sparsifier of $G$ for vectors in $\bbR^n_\pi$} if $(1- \epsilon) \bmx^\top L_G \bmx \leq \bmx^\top L_H \bmx \leq (1+ \epsilon) \bmx^\top L_G \bmx$ holds for every $\bmx \in \bbR^n_\pi$, or more simply, $(1-\epsilon)L_H \preceq_\pi L_G \preceq_\pi (1-\epsilon)L_H$ holds.
In this section, we present a randomized algorithm that constructs  spectral sparsifiers for vectors in $\bbR^n_\pi$.

\begin{algorithm}[t!]
  \caption{}\label{alg:graph}
  \begin{algorithmic}[1]
    \Require{A graph $G=(V,E,\bmw_G)$, a permutation $\pi$, $\epsilon, \delta\in(0,1)$, and $\set{p_e}_{e \in E}$.}
    \State{$K \leftarrow \Theta(\log (n/\delta)/\epsilon^2)$.}
    \State{$\bmw_H \leftarrow \bmzero$.}
    \State{Let $(Z_{k,e})_{k \in[K], e \in E}$ be mutually independent random variables in $\set{0, 1}$ with $\E[ Z_{k,e}] = p_e$.}
    \For{$k \leftarrow 1$ to $K$}
    \For{each $e \in E$}
    \State Increase $\bmw_H(e)$ by $Z_{k,e}\bmw_G(e)/(K p_e)$.
    \EndFor
    \EndFor
    \State{\Return a graph $H=(V,E,\bmw_H)$.}
  \end{algorithmic}
\end{algorithm}

Our algorithm is a very simple sampling algorithm (Algorithm~\ref{alg:graph}).
Given a graph $G=(V,E,\bmw_G)$ and edge sampling probabilities $\set{p_e}_{e \in E}$, we repeat the following process $K:=\Theta(\log(n/\delta)/\epsilon^2)$ times:
For each edge $e \in E$, we increase the weight $\bmw_H(e)$ by $\bmw_G(e)/(Kp_e)$ with probability $p_e$.
Then, we output the graph $H=(V,E,\bmw_H)$.
Clearly, the expected number of edges with non-zero weights in the output graph is $O\bigl(K \sum_{e \in E}p_e\bigr) = O\bigl(\sum_{e \in E}p_e \log (n/\delta)/\epsilon^2\bigr)$, which is small if $\sum_{e \in E}p_e$ is small.

We will show that Algorithm~\ref{alg:graph} produces a spectral sparsifier with high probability if the sampling probabilities are higher than some thresholds.
To describe the thresholds, we introduce some definitions.
For $i \in [n-1]$, we say that an edge $e=\set{u,v} \in E$ \emph{crosses $i$ in the ordering induced by $\pi$} if $\min\set{\pi^{-1}_u,\pi^{-1}_v} \leq i < \max\set{\pi^{-1}_u,\pi^{-1}_v}$, and we denote it by $i \in_\pi e$.
In words, when we align the vertices of $G$ as $\pi_1,\ldots,\pi_n$ on a line in this order, the edge $\set{u,v}$ crosses the bisector between $\pi_i$ and $\pi_{i+1}$ (see Figure~\ref{fig:cross}).
For $i,j \in [n-1]$, let $d_{G,\pi}(i,j)$ be the total weight of edges that cross both $i$ and $j$ in $\pi$.
Note that $d_{G,\pi}(i,j)=d_{G,\pi}(j,i)$.

\begin{figure}[t]
    \center
    \includegraphics[]{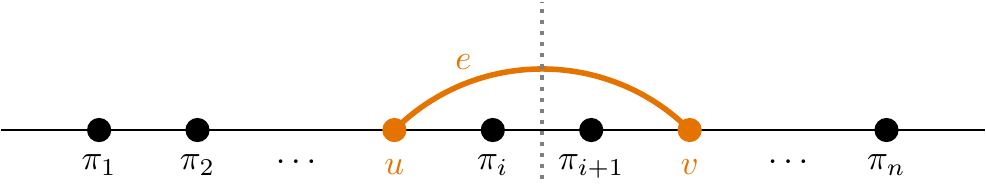}
    \caption{An edge crossing $i$ in the order induced by $\pi$.}\label{fig:cross}
\end{figure}

In the rest of this section, we prove the following:
\begin{theorem}\label{the:graph}
  Suppose that
  \[
    p_e \geq \frac{\bmw_G(e)}{\min_{i,j \in_\pi e}d_{G,\pi}(i,j)}
  \]
  holds for every $e \in E$.
  Then, the graph $H$ produced by Algorithm~\ref{alg:graph} is an $\epsilon$-spectral sparsifier for vectors in $\bbR^n_\pi$ with probability at least $1-\delta$.
\end{theorem}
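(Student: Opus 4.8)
The plan is to remove the (implicit) maximum in the quadratic form by reparametrizing vectors of $\bbR^n_\pi$ through their consecutive differences. Under this change of variables $L_G$ acquires a quadratic form with \emph{nonnegative} coefficients on the cone $\bbR^{n-1}_+$, so that the $\bbR_+^\pi$-copositivity statements constituting the sparsification guarantee reduce to preserving each of $O(n^2)$ nonnegative scalars; each of these is then a routine application of Chernoff's bound.

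Concretely, I would first reduce to $\pi = \mathrm{id}$ (conjugating by $P_\pi$, as in Lemma~\ref{lem:copositive-equivalence}) and, for $\bmx \in \bbR^n_\pi$, set $\bmy := J\bmx$, so that $\bmy(i) = \bmx(i) - \bmx(i+1) \ge 0$ and $\bmy$ ranges over all of $\bbR^{n-1}_+$ as $\bmx$ ranges over $\bbR^n_\pi$. The key identity is that for each edge $e$ the coordinates $i$ with $i \in_\pi e$ form precisely the contiguous block of indices between the two endpoints of $e$ in the ordering $\pi$, so the difference of $\bmx$ across $e$ telescopes to $\sum_{i \in_\pi e}\bmy(i)$; squaring and summing over $E$ gives
\[
  \bmx^\top L_G\bmx = \sum_{e\in E}\bmw_G(e)\Bigl(\sum_{i \in_\pi e}\bmy(i)\Bigr)^2 = \sum_{i,j\in[n-1]} d_{G,\pi}(i,j)\,\bmy(i)\,\bmy(j),
\]
and the identical computation with $\bmw_H$ in place of $\bmw_G$ yields $\bmx^\top L_H\bmx = \sum_{i,j} d_{H,\pi}(i,j)\,\bmy(i)\,\bmy(j)$, where $d_{H,\pi}(i,j) := \sum_{e:\,i,j\in_\pi e}\bmw_H(e)$. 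Because $\bmy(i)\,\bmy(j) \ge 0$ for every $i,j$, it then suffices to prove the pointwise estimate $(1-\epsilon)\,d_{G,\pi}(i,j) \le d_{H,\pi}(i,j) \le (1+\epsilon)\,d_{G,\pi}(i,j)$ for every pair $i,j \in [n-1]$ (the case $d_{G,\pi}(i,j) = 0$ is vacuous, since then no edge, hence no sampled copy, crosses both $i$ and $j$).

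For the concentration step, fix a pair $i,j$ with $R := d_{G,\pi}(i,j) > 0$. By Algorithm~\ref{alg:graph}, $K\,d_{H,\pi}(i,j)$ equals $X := \sum_{k=1}^{K}\sum_{e:\,i,j\in_\pi e} Z_{k,e}\,\bmw_G(e)/p_e$, a sum of independent random variables, the $(k,e)$-th of which is supported on $[0,\,\bmw_G(e)/p_e]$. Here the hypothesis on $p_e$ enters decisively: since $i,j\in_\pi e$, it gives $\bmw_G(e)/p_e \le \min_{i',j'\in_\pi e}d_{G,\pi}(i',j') \le d_{G,\pi}(i,j) = R$, so every summand lies in $[0,R]$, while $\E[X] = K\,d_{G,\pi}(i,j) = KR$. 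Lemma~\ref{lem:chernoff} then gives $\Pr[\,|X - KR| \ge \epsilon KR\,] \le 2\exp(-\epsilon^2 K/3)$, which is below $\delta/n^2$ once the constant hidden in $K = \Theta(\log(n/\delta)/\epsilon^2)$ is taken large enough; a union bound over the at most $n^2$ pairs finishes the proof.

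The conceptual hurdle, and the reason the thresholds look the way they do, is that one might expect this to require matrix-valued concentration, forcing the sampling probabilities to track effective resistances (which, as the introduction notes, would blow up the hyperedge count in the application). The point is that on the cone $\bbR_+^\pi$ only \emph{entrywise} control of the error matrix is needed, not the Loewner order, so scalar Chernoff per coordinate pair suffices; and the threshold $\bmw_G(e)/\min_{i',j'\in_\pi e}d_{G,\pi}(i',j')$ is exactly what makes the per-copy bound $R$ in the Chernoff estimate coincide with the per-copy expectation uniformly over all relevant pairs $(i,j)$, yielding the pair-independent failure probability $2\exp(-\Theta(\epsilon^2 K))$. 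The remaining steps — the telescoping identity, the reduction to pairs, and the union bound — are bookkeeping.
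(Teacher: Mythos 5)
Your proposal is correct and follows essentially the same route as the paper: reparametrizing $\bbR^n_\pi$ by consecutive differences (the paper's $B_G = B_\pi J P_\pi$ factorization and Lemma~\ref{lem:copositive-equivalence}), reducing the copositivity requirement to entrywise control $(1-\epsilon)d_{G,\pi}(i,j) \leq d_{H,\pi}(i,j) \leq (1+\epsilon)d_{G,\pi}(i,j)$ via Lemma~\ref{lem:element}, and finishing with a scalar Chernoff bound per pair plus a union bound over the $O(n^2)$ pairs. Your explicit handling of the $d_{G,\pi}(i,j)=0$ case and the telescoping identity are just an unpacked version of the paper's matrix bookkeeping.
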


Our goal is to show that $(1-\epsilon)L_H \preceq_\pi L_G \preceq_\pi (1+\epsilon)L_H$ under the assumption of Theorem~\ref{the:graph}.
To this end, we rephrase this condition using Lemma~\ref{lem:copositive-equivalence}.
Let $W_G,W_H \in \bbR^{E \times E}$ be the diagonal matrices whose $(e,e)$-th entries are $\bmw_G(e)$ and $\bmw_H(e)$, respectively, for each $e \in E$.
Then, we can write $L_G=B_G^\top W_G B_G$ and $L_H=B_G^\top W_H B_G$, where $B_G \in \bbR^{E \times V}$ is the incidence matrix of $G$.
The following is a simple but important observation:
\begin{proposition}
  We can write $B_G = B_\pi J P_\pi$, where $B_\pi \in \bbR^{E \times (n-1)}$ is a matrix defined as
  \[
    B_\pi(e,i) = \begin{cases}
      1 & \text{if }i \in_\pi e, \\
      0 & \text{otherwise}.
    \end{cases}
  \]
\end{proposition}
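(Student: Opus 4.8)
The plan is to verify the factorization $B_G = B_\pi J P_\pi$ by tracking what each factor does to a vector $\bmx \in \bbR^V$ and then matching entries. Recall that $L_G = B_G^\top W_G B_G$ forces $B_G$ to be the usual (oriented) incidence matrix of $G$: for an edge $e = \{u,v\}$, the row $B_G(e,\cdot)$ has a $+1$ in one endpoint, a $-1$ in the other, and zeros elsewhere, so that $(B_G\bmx)(e) = \bmx(u) - \bmx(v)$ up to sign. The key is that the sign is chosen consistently with the permutation $\pi$: the ``$+1$'' is placed on the endpoint appearing \emph{earlier} in the order $\pi_1,\dots,\pi_n$. With this convention, if $u$ comes before $v$ in $\pi$ then $(B_G\bmx)(e) = \bmx(u) - \bmx(v)$, and this is exactly the telescoping sum $\sum_{i \in_\pi e} \bigl(\bmx(\pi_i) - \bmx(\pi_{i+1})\bigr)$, since $i \in_\pi e$ precisely captures the consecutive gaps lying strictly between the positions of $u$ and $v$.

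Concretely, I would unpack the composition from right to left. First, $P_\pi \bmx$ is the vector whose $i$-th coordinate is $\bmx(\pi_i)$, i.e.\ it lists the entries of $\bmx$ in $\pi$-order. Next, $J$ applied to any vector $\bmz \in \bbR^n$ produces $(J\bmz)(i) = \bmz(i) - \bmz(i+1)$ for $i \in [n-1]$, so $J P_\pi \bmx$ is the $(n-1)$-vector of consecutive differences $\bmx(\pi_i) - \bmx(\pi_{i+1})$ — exactly the reparameterization $\bmy$ from the proof-ideas section. Finally, applying $B_\pi$, whose $(e,i)$-entry is the indicator $[i \in_\pi e]$, gives $(B_\pi J P_\pi \bmx)(e) = \sum_{i \in_\pi e} \bigl(\bmx(\pi_i) - \bmx(\pi_{i+1})\bigr)$. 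The sum telescopes to $\bmx(\pi_{a}) - \bmx(\pi_{b})$, where $a = \min\{\pi^{-1}_u,\pi^{-1}_v\}$ and $b = \max\{\pi^{-1}_u,\pi^{-1}_v\}$ for $e = \{u,v\}$; that is, $\bmx$ evaluated at the earlier endpoint minus $\bmx$ at the later endpoint. This coincides with $(B_G\bmx)(e)$ under the $\pi$-consistent orientation convention, and since this holds for all $\bmx$, the matrix identity $B_G = B_\pi J P_\pi$ follows. (Note that this orientation choice is harmless: $L_G = B_G^\top W_G B_G$ is insensitive to per-row sign flips of $B_G$, so we are free to fix the orientation by $\pi$.)

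I do not expect a serious obstacle here — the statement is essentially a bookkeeping identity — but the one point requiring care is the orientation convention for $B_G$. One must either state upfront that $B_G$ is taken with edges oriented according to $\pi$ (earlier endpoint positive), or observe that the identity is claimed only up to row sign flips, which is all that matters for recovering $L_G$ and $L_H$. After fixing this, the rest is the telescoping computation above, which I would write out in one line. It is also worth double-checking the boundary behavior of the indicator: $i \in_\pi e$ means $\min\{\pi^{-1}_u,\pi^{-1}_v\} \le i < \max\{\pi^{-1}_u,\pi^{-1}_v\}$, so the set of crossed indices is exactly $\{a, a+1, \dots, b-1\}$, giving the clean telescoping $\sum_{i=a}^{b-1}\bigl(\bmx(\pi_i) - \bmx(\pi_{i+1})\bigr) = \bmx(\pi_a) - \bmx(\pi_b)$ with no off-by-one error.
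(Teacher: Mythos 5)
Your verification is correct and is exactly the direct computation the paper leaves implicit (the proposition is stated as an observation without proof): $P_\pi$ reorders, $J$ takes consecutive differences, $B_\pi$ sums over crossed gaps, and the sum telescopes to the difference of $\bmx$ at the two endpoints. Your remark about the orientation convention is the right one to make --- the identity holds for the signed incidence matrix with each edge oriented by $\pi$ (earlier endpoint positive), and this choice is immaterial since $L_G = B_G^\top W_G B_G$ is invariant under row sign flips, which is all the paper uses.
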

Note that each row of $B_\pi$ has consecutive ones.
Then, Lemma~\ref{lem:copositive-equivalence} implies that $(1-\epsilon)L_H \preceq_\pi L_G \preceq_\pi (1+\epsilon)L_H$ holds if and only if $(1+\epsilon)B_\pi^\top W_H B_\pi - B_\pi^\top W_G B_\pi$ and $B_\pi^\top W_G B_\pi - (1-\epsilon)B_\pi^\top W_H B_\pi$ are copositive.
The elements of these matrices can be written as follows:
\begin{lemma}\label{lem:element}
  For $i,j \in [n-1]$, we have $(B_\pi^\top W_G B_\pi)(i,j) = d_{G,\pi}(i,j)$ and $(B_\pi^\top W_H B_\pi)(i,j) = d_{H,\pi}(i,j)$.
\end{lemma}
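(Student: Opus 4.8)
The plan is to compute the $(i,j)$-th entry of $B_\pi^\top W_G B_\pi$ directly from the definitions and recognise it as $d_{G,\pi}(i,j)$; the statement for $H$ is then literally the same computation with $\bmw_G$ replaced by $\bmw_H$. First I would write out the matrix product entrywise: since $W_G = \diag\bigl((\bmw_G(e))_{e\in E}\bigr)$, we have
\[
  (B_\pi^\top W_G B_\pi)(i,j) = \sum_{e \in E} B_\pi(e,i)\,\bmw_G(e)\,B_\pi(e,j).
\]
Next I would invoke the definition of $B_\pi$: the factor $B_\pi(e,i)B_\pi(e,j)$ equals $1$ exactly when $i \in_\pi e$ and $j \in_\pi e$, and $0$ otherwise. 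Hence the sum collapses to $\sum_{e \in E : i \in_\pi e,\ j \in_\pi e} \bmw_G(e)$, which is precisely the total weight of edges crossing both $i$ and $j$ in the ordering induced by $\pi$, i.e. $d_{G,\pi}(i,j)$ by definition. Repeating verbatim with $W_H$ and $\bmw_H$ gives $(B_\pi^\top W_H B_\pi)(i,j) = d_{H,\pi}(i,j)$.

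There is essentially no obstacle here — the lemma is a bookkeeping identity unpacking two definitions — so the only thing to be careful about is making the indexing transparent: $B_\pi$ is indexed by $E \times [n-1]$, $W_G$ by $E \times E$, so the product $B_\pi^\top W_G B_\pi$ is indeed $(n-1)\times(n-1)$, matching the index range $i,j \in [n-1]$ in the statement, and the quadratic-form condition $(1+\epsilon)B_\pi^\top W_H B_\pi - B_\pi^\top W_G B_\pi$ being copositive is a statement about $(n-1)\times(n-1)$ matrices as required. I would also note for clarity (though it is not strictly needed for the proof) that since each row of $B_\pi$ is an interval of consecutive ones, the condition "$i \in_\pi e$ and $j \in_\pi e$" for $i \le j$ is equivalent to "every index in $\{i,i+1,\dots,j\}$ lies in $e$", which is why $d_{G,\pi}$ is naturally described via crossings; but this observation is optional and the bare entrywise computation above already proves the claim.

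\begin{proof}
  Since $W_G$ is the diagonal matrix with $(e,e)$-th entry $\bmw_G(e)$, for any $i,j \in [n-1]$ we have
  \[
    (B_\pi^\top W_G B_\pi)(i,j)
    = \sum_{e \in E} B_\pi(e,i)\,\bmw_G(e)\,B_\pi(e,j).
  \]
  By the definition of $B_\pi$, the product $B_\pi(e,i)B_\pi(e,j)$ equals $1$ if both $i \in_\pi e$ and $j \in_\pi e$, and equals $0$ otherwise. Hence
  \[
    (B_\pi^\top W_G B_\pi)(i,j)
    = \sum_{\substack{e \in E \\ i \in_\pi e,\ j \in_\pi e}} \bmw_G(e)
    = d_{G,\pi}(i,j),
  \]
  where the last equality is the definition of $d_{G,\pi}(i,j)$ as the total weight of edges crossing both $i$ and $j$ in the ordering induced by $\pi$. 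Replacing $W_G$ by $W_H$ and $\bmw_G$ by $\bmw_H$ in the identical computation yields $(B_\pi^\top W_H B_\pi)(i,j) = d_{H,\pi}(i,j)$.
\end{proof}
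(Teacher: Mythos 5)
Your proof is correct and is essentially the same as the paper's: both expand the $(i,j)$-th entry of $B_\pi^\top W_G B_\pi$ as a sum over edges weighted by the indicator $[i \in_\pi e \wedge j \in_\pi e]$ and identify it with $d_{G,\pi}(i,j)$, with the $H$ case following verbatim. Your version merely spells out the intermediate step $B_\pi(e,i)B_\pi(e,j)$ explicitly, which the paper leaves implicit.
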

\begin{proof}
  We have
  \[
    (B_\pi^\top W_G B_\pi)(i,j) = \sum_{e \in E}\bmw_G(e) [i \in_\pi e \wedge j \in_\pi e] = d_{G,\pi}(i,j).
  \]
  The same argument holds for $(B_\pi^\top W_H B_\pi)(i,j)$.
\end{proof}

\begin{proof}[Proof of Theorem~\ref{the:graph}]
  To show the copositivity of $(1+\epsilon)B_\pi^\top W_H B_\pi - B_\pi^\top W_G B_\pi$ and $B_\pi^\top W_G B_\pi - (1-\epsilon)B_\pi^\top W_H B_\pi$, it suffices to show that they are non-negative matrices with probability at least $1-\delta$.
  To this end, we show that, for every $i,j \in [n-1]$, $(1-\epsilon)d_{H,\pi}(i,j) \leq d_{G,\pi}(i,j) \leq (1+\epsilon)d_{G,\pi}(i,j)$ with probability at least $1-\delta/\binom{n}{2}$.
  Then, a union bound over the choice of $i,j \in [n-1]$ and Lemma~\ref{lem:element} gives the desired result.


  Fix $i,j \in [n-1]$.
  Note that $d_{H,\pi}(i,j) = \sum_{k,e}X_{k,e}$, where
  \[
    X_{k,e} =\frac{Z_{k,e}\bmw_G(e)}{K p_e} [i,j\in_\pi e]. \label{eq:X-i-e}
  \]
  Then, we have
  \[
    \E\bigl[d_{H,\pi}(i,j)\bigr] = K \sum_{e \in E} \frac{\E[Z_{k,e}]\bmw_G(e)}{K p_e}[i,j\in_\pi e] = \sum_{e \in E} \bmw_G(e)[i,j\in_\pi e] = d_{G,\pi}(i,j).
  \]
  In addition, from the assumption on $\set{p_e}_{e \in E}$, we have
  \[
    X_{k,e} \leq \frac{\min_{i',j'\in_\pi e}d_{G,\pi}(i',j')}{K}[i,j\in_\pi e]\leq \frac{d_{G,\pi}(i,j)}{K}.
  \]
  By Chernoff's bound (Lemma~\ref{lem:chernoff}), we have
  \[
    \Pr\Bigl[ |d_{H,\pi}(i,j) - d_{G,\pi}(i,j)| \geq \epsilon d_{G,\pi}(i,j) \Bigr]  \leq 2\exp\Bigl(-\frac{\epsilon^2 K }{3}\Bigr)
    \leq
    \frac{\delta}{\binom{n}{2}}.
  \]
  by setting the hidden constant in $K$ to be sufficiently large.
\end{proof}


\section{Spectral Sparsification of Hypergraphs}\label{sec:hypergraph}

In this section, we prove Theorems~\ref{the:hypergraph} and~\ref{the:directed-hypergraph}.

Our algorithm, given in Algorithm~\ref{alg:hypergraph}, is almost identical to Algorithm~\ref{alg:graph} except that the input and the output are now hypergraphs and that the error probability $\delta$ is divided by $n!$, and it works for both the undirected and directed cases by changing the sampling probabilities, as discussed in Sections~\ref{subsec:undirected} and~\ref{subsec:directed}, respectively.

\begin{algorithm}[t!]
  \caption{}\label{alg:hypergraph}
  \begin{algorithmic}[1]
    \Require{An undirected/directed hypergraph $G=(V,E,\bmw_G)$, a permutation $\pi$, $\epsilon,\delta\in(0,1)$, and $\set{p_e}_{e \in E}$.}
    \State{$K \leftarrow \Theta(\log (n!/\delta)/\epsilon^2)$, where $n=|V|$.}
    \State{$\bmw_H \leftarrow \bmzero$.}
    \State{Let $(Z_{k,e})_{k\in[K], e \in E}$ be mutually independent random variables in $\set{0, 1}$ with $\E[ Z_{k,e}] = p_e$.}
    \For{$k \leftarrow 1$ to $K$}\label{line:hypergraph-loop-start}
    \For{each $e \in E$}
    \State Increase $\bmw_H(e)$ by $Z_{k,e}\bmw_G(e)/(K p_e)$.
    \EndFor
    \EndFor\label{line:hypergraph-loop-end}
    \State{\Return a hypergraph $H=(V,E,\bmw_H)$.}
  \end{algorithmic}
\end{algorithm}

\subsection{Undirected Hypergraphs}\label{subsec:undirected}

For a hypergraph $G=(V,E,\bmw_G)$ and two vertices $u,v\in V$, we define $d_G(u,v)$ as the total weight of hyperedges that include both $u$ and $v$.
Note that $d_G(v,u)=d_G(u,v)$.
The following lemma gives a condition on the sampling probability of the resulting hypergraph being a spectral sparsifier.
\begin{lemma}\label{lem:hypergraph-condition}
  Suppose that
  \[
    p_e \geq \frac{\bmw_G(e)}{\min\limits_{u,v\in e}d_G(u,v)}
  \]
  holds for every $e \in E$.
  Then, Algorithm~\ref{alg:hypergraph} produces an $\epsilon$-spectral sparsifier of $G$ with probability at least $1-\delta$.
\end{lemma}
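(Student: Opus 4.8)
The plan is to reduce the hypergraph statement to the graph statement of Theorem~\ref{the:graph}, exploiting the order-fixing observation from the proof ideas. First I would fix an arbitrary permutation $\pi$ of $[n]$ and recall that for any $\bmx \in \bbR^n_\pi$ we have $\bmx^\top L_G(\bmx) = \bmx^\top L_{G_\pi}\bmx$, where $G_\pi$ is the graph obtained by replacing each hyperedge $e$ with the single edge $\set{s_e,t_e}$ joining the first and last vertices of $e$ in the order $\pi_1,\dots,\pi_n$; the same identity holds for $H$ and $H_\pi$. Thus it suffices to prove that, with probability at least $1-\delta/n!$ for each fixed $\pi$, the random graph $H_\pi$ is an $\epsilon$-spectral sparsifier of $G_\pi$ for vectors in $\bbR^n_\pi$; a union bound over all $n!$ permutations then yields the claim with probability at least $1-\delta$, and on this event $(1-\epsilon)\bmx^\top L_H(\bmx) \le \bmx^\top L_G(\bmx) \le (1+\epsilon)\bmx^\top L_H(\bmx)$ holds for every $\bmx \in \bbR^V$ because every such $\bmx$ lies in some $\bbR^n_\pi$.

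Next I would check that Algorithm~\ref{alg:hypergraph} run on $G$ induces exactly Algorithm~\ref{alg:graph} run on $G_\pi$: the random variables $(Z_{k,e})$ are shared, and the weight update of a hyperedge $e$ in $H$ translates verbatim into the weight update of the corresponding edge $\set{s_e,t_e}$ in $H_\pi$, with the same sampling probability $p_e$ and the same multiplier $\bmw_G(e)/(Kp_e)$. (The value of $K$ here uses $\log(n!/\delta)$ in place of $\log(n/\delta)$, which is only larger, so Theorem~\ref{the:graph} still applies with error parameter $\delta/n!$.) To invoke Theorem~\ref{the:graph} for $G_\pi$ I must verify its hypothesis, namely that for every $e \in E$,
\[
  p_e \ge \frac{\bmw_G(e)}{\min_{i,j \in_\pi \set{s_e,t_e}} d_{G_\pi,\pi}(i,j)}.
\]
The edge $\set{s_e,t_e}$ crosses exactly those indices $i$ lying between the $\pi$-positions of $s_e$ and $t_e$; for such $i$, an edge $\set{s_{e'},t_{e'}}$ crosses $i$ iff $e'$ has one vertex at or before position $i$ and one at or after position $i+1$. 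The key combinatorial point is: if $i,j$ are both crossed by $\set{s_e,t_e}$, then in particular both $s_e$ and $t_e$ lie on opposite sides of each of the cuts at $i$ and at $j$, so $e'=e$ itself contributes to $d_{G_\pi,\pi}(i,j)$, and more importantly every hyperedge $e'$ with $\set{u,v}\subseteq e'$ for some fixed pair $u,v\in e$ separated by both cuts contributes; choosing the pair $u,v\in e$ that minimizes $d_G(u,v)$ and noting that the worst case over $\pi$ of $\min_{i,j}d_{G_\pi,\pi}(i,j)$ for a fixed $e$ equals $\min_{u,v\in e}d_G(u,v)$ — as argued informally in the proof-ideas section ($\max_\pi p_e^\pi = 1/\min_{u,v\in e}d_G(u,v)$) — gives $\min_{i,j\in_\pi\set{s_e,t_e}}d_{G_\pi,\pi}(i,j)\ge \min_{u,v\in e}d_G(u,v)$, so the hypothesis of Lemma~\ref{lem:hypergraph-condition} implies the hypothesis of Theorem~\ref{the:graph} for every $\pi$.

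The main obstacle I anticipate is precisely this last inequality: showing that for every permutation $\pi$ and every hyperedge $e$, any pair of indices $i,j$ both crossed by $\set{s_e,t_e}$ in $\pi$ satisfies $d_{G_\pi,\pi}(i,j)\ge \min_{u,v\in e}d_G(u,v)$. One must argue carefully that a hyperedge $e'$ counted in $d_G(u,v)$ — i.e.\ containing the minimizing pair $u,v\in e$ — has its own pair $\set{s_{e'},t_{e'}}$ crossing both cuts, which requires that $u$ and $v$ sit on opposite sides of both the $i$-cut and the $j$-cut; this follows because $u,v\in e$ and $s_e,t_e$ are the extreme elements of $e$, so $u,v$ lie within the $\pi$-interval spanned by $s_e,t_e$, hence on opposite sides of any cut that $\set{s_e,t_e}$ crosses — wait, that is not automatic, since $u$ and $v$ could both lie to one side of a cut strictly inside the interval. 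The correct statement is weaker: one takes the pair $u,v\in e$ achieving the minimum and shows $\min_{i,j\in_\pi e} d_{G_\pi,\pi}(i,j)$ is at least $d_G(u,v)$ only for the \emph{specific} cuts induced by $u$ and $v$ themselves, and then verifies those are among the crossed indices — this is the delicate bookkeeping that the formal proof must get right, and it is where I would spend the most care. Everything else is a direct composition of Theorem~\ref{the:graph}, the order-fixing identity, and a union bound over permutations.
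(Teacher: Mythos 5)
Your overall skeleton is the same as the paper's: fix $\pi$, use the identity $\bmx^\top L_G(\bmx)=\bmx^\top L_{G_\pi}\bmx$ on $\bbR^n_\pi$, observe that Algorithm~\ref{alg:hypergraph} run on $G$ is exactly Algorithm~\ref{alg:graph} run on $G_\pi$ with failure parameter $\delta/n!$ (which is what the choice $K=\Theta(\log(n!/\delta)/\epsilon^2)$ is for), and take a union bound over the $n!$ permutations. However, the one step that carries the real content of the lemma — verifying the hypothesis of Theorem~\ref{the:graph} for \emph{every} $\pi$, i.e.\ that $\min_{i,j\in_\pi e_\pi} d_{G_\pi,\pi}(i,j)\geq \min_{u,v\in e}d_G(u,v)$ — is left unresolved in your write-up, and the remedy you sketch at the end points in the wrong direction. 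You propose to take the pair $u,v\in e$ minimizing $d_G(u,v)$ and to establish the bound ``only for the specific cuts induced by $u$ and $v$ themselves.'' That is not enough: Theorem~\ref{the:graph} requires $p_e\geq \bmw_G(e)/\min_{i,j\in_\pi e}d_{G_\pi,\pi}(i,j)$, so you must lower-bound $d_{G_\pi,\pi}(i,j)$ for \emph{all} pairs of indices crossed by $e_\pi$; a bound at two distinguished cuts says nothing about the minimum over the others (and indeed the minimizing pair $u,v$ may well lie on the same side of some cut strictly inside the interval spanned by $e_\pi$, as you yourself noticed).

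The correct and in fact simpler argument uses the \emph{extreme} pair rather than the minimizing pair. Fix any $i,j$ with $i,j\in_\pi e_\pi$, so $\pi^{-1}_{s_e}\leq \min\set{i,j}$ and $\pi^{-1}_{t_e}>\max\set{i,j}$. For any hyperedge $e'$ containing both $s_e$ and $t_e$, its image $e'_\pi=\set{s_{e'},t_{e'}}$ satisfies $\pi^{-1}_{s_{e'}}\leq \pi^{-1}_{s_e}$ and $\pi^{-1}_{t_{e'}}\geq \pi^{-1}_{t_e}$, i.e.\ the span of $e'_\pi$ contains the span of $e_\pi$, so $e'_\pi$ crosses both $i$ and $j$. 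Hence
\[
  d_{G_\pi,\pi}(i,j)\;\geq\; d_G(s_e,t_e)\;\geq\;\min_{u,v\in e}d_G(u,v),
\]
which holds for every crossed pair $(i,j)$ and every $\pi$, and therefore the hypothesis $p_e\geq \bmw_G(e)/\min_{u,v\in e}d_G(u,v)$ of the lemma implies the hypothesis of Theorem~\ref{the:graph} for every permutation. This is precisely the content of the paper's displayed identity $\min_\pi\min_{i,j\in_\pi e}d_{G_\pi,\pi}(i,j)=\min_{u,v\in e}d_G(u,v)$ (only the ``$\geq$'' direction is needed for the lemma; the ``$\leq$'' direction, obtained by placing some pair of $e$ at the two ends of the order, only shows the bound is tight). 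With this step supplied, the rest of your argument goes through as you describe.
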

\begin{proof}
  For a permutation $\pi$, let $G_\pi$ be the graph obtained from $G$ by replacing each hyperedge $e \in E$ with an edge $e_\pi := \set{s_e,t_e}$, where $s_e = \argmin_{v\in e} \pi^{-1}_v$ and $t_e = \argmax_{v\in e} \pi^{-1}_v$.
  Note that, for every $\bmx \in \bbR_\pi^V$, we have
  \[
    \bmx^\top L_G(\bmx)
    =
    \sum_{e \in E}\bmw_G(e)\max_{u,v \in e}\bigl(\bmx(u)-\bmx(v)\bigr)^2
    =
    \sum_{e \in E}\bmw_G(e)\bigl(\bmx(s_e)-\bmx(t_e)\bigr)^2
    =
    \bmx^\top L_{G_\pi}\bmx.
  \]
  For any $e \in E$, we have
  \[
    \min_\pi \min_{i,j \in_\pi e}d_{G_\pi,\pi}(i,j)
    =
    \min_{\pi:1,n-1\in_\pi e} d_{G_\pi,\pi}(1,n-1)
    =
    \min_{u,v\in e}d_G(u,v).
  \]
  (See Section~\ref{sec:graph} for the definition of $d_{G_\pi,\pi}(i,j)$.)
  Then, the claim holds by Theorem~\ref{the:graph} and a union bound over the choice of $\pi$.
\end{proof}

The following is useful for providing an upper bound on the number of hyperedges in $H$.
\begin{lemma}\label{lem:hypergraph-total-sum}
  We have
  \[
    \sum_{e \in E}\frac{\bmw_G(e)}{\min\limits_{u,v \in e}d_G(u,v)} = O(n^2),
  \]
  where $n=|V|$.
\end{lemma}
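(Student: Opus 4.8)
The plan is a straightforward charging argument. We may assume every hyperedge has $|e|\ge 2$, since a hyperedge of size one contributes $\max_{u,v\in e}(\bmx(u)-\bmx(v))^2=0$ for all $\bmx$ and hence can be discarded without affecting $L_G$. For each $e\in E$, fix a pair of distinct vertices $u_e,v_e\in e$ attaining the minimum, so that $\min_{u,v\in e}d_G(u,v)=d_G(u_e,v_e)$. Since $e$ itself contains both $u_e$ and $v_e$, we have $d_G(u_e,v_e)\ge \bmw_G(e)>0$, so each summand $\bmw_G(e)/\min_{u,v\in e}d_G(u,v)$ is well defined.

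Next I would partition $E$ according to the chosen pair. For an unordered pair $\{a,b\}$ of distinct vertices, set $S_{\{a,b\}}=\{e\in E:\{u_e,v_e\}=\{a,b\}\}$. Every $e\in S_{\{a,b\}}$ contains both $a$ and $b$, so by definition of $d_G$ we have $\sum_{e\in S_{\{a,b\}}}\bmw_G(e)\le d_G(a,b)$. Since each such $e$ satisfies $\min_{u,v\in e}d_G(u,v)=d_G(a,b)$, it follows that
\[
  \sum_{e\in S_{\{a,b\}}}\frac{\bmw_G(e)}{\min_{u,v\in e}d_G(u,v)}
  =\frac{1}{d_G(a,b)}\sum_{e\in S_{\{a,b\}}}\bmw_G(e)\le 1
\]
whenever $S_{\{a,b\}}\ne\emptyset$ (and the left-hand side is $0$ otherwise, so the bound holds trivially).

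Finally, summing over all $\binom{n}{2}$ unordered pairs of distinct vertices yields $\sum_{e\in E}\bmw_G(e)/\min_{u,v\in e}d_G(u,v)\le\binom{n}{2}=O(n^2)$, which is the claim. There is no genuine obstacle here; the only points requiring a little care are the reduction to hyperedges of size at least two and the strict positivity of the denominators, both of which are handled by the remarks above. (In the directed setting one runs the same argument with the appropriate analogue of $d_G$ in place, and the pairs $(a,b)$ become ordered, changing the bound only by a constant factor.)
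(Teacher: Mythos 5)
Your proof is correct and is essentially the paper's argument: both charge each hyperedge to a pair of vertices attaining $\min_{u,v\in e}d_G(u,v)$ and observe that the total charge absorbed by any pair $\{a,b\}$ is at most $d_G(a,b)/d_G(a,b)=1$, giving $\binom{n}{2}=O(n^2)$ overall. The only difference is presentational — you partition hyperedges statically by their chosen minimizing pair, while the paper realizes the same charging via an iterative deletion procedure over pairs sorted by increasing $d_G$.
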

\begin{proof}
  Let $\set{u_1,v_1},\ldots,\set{u_m,v_m}$ be a sequence of (unordered) pairs of vertices in increasing order of $d_G(u_i,v_i)$, where $m = {n \choose 2}$.
  Then, consider the following iterative algorithm with $m$ steps.
  At step $i$, for each (remaining) hyperedge $e$ including both $u_i$ and $v_i$, we assign a cost of $\bmw_G(e)/d_G(u_i,v_i)$ to $e$ and delete $e$.

  The cost assigned to a hyperedge $e$ is exactly $\bmw_G(e)/\min_{u,v\in e}d_G(u,v)$;
  hence we want to upper-bound the total cost assigned to the hyperedges.
  This can be bounded by $O(n^2)$ because the total cost assigned to the hyperedges at step $i$ is at most $1$ and the number of steps is $m=O(n^2)$.
\end{proof}
We note that the bound $O(n^2)$ in Lemma~\ref{lem:hypergraph-total-sum} is tight because an unweighted complete graph satisfies $\sum_{e \in E}\bmw_G(e)/\min_{u,v\in e}d_G(u,v) = \sum_{e\in E}1= {n \choose 2} = \Omega(n^2)$.

\begin{proof}[Proof of Theorem~\ref{the:hypergraph}]
  Our algorithm first calculates $d_G(u,v)$ for each (unordered) pair $\set{u,v} \in {V \choose 2}$, and then calls Algorithm~\ref{alg:hypergraph} with $p_e = \bmw_G(e)/\min_{u,v \in e}d_G(u,v)$ and $\delta=1/2n$.
  By Lemma~\ref{lem:hypergraph-condition}, the output is an $\epsilon$-spectral sparsifier with probability at least $1-1/2n$ by setting the hidden constant in $K$ to be sufficiently large.
  The expected number of hyperedges in the output is $O(K\sum_{e \in E}p_e) = O(n^3\log n/\epsilon^2)$ by Lemma~\ref{lem:hypergraph-total-sum}.
  Then, our algorithm outputs an $\epsilon$-spectral sparsifier of $O(n^3\log n/\epsilon^2)$ hyperedges with probability at least $1-1/n$.

  Now, we analyze the time complexity.
  To compute $d_G(u,v)$'s, we introduce a counter $c_{uv}$ initialized to be zero for each unordered pair $\set{u,v}$ of vertices.
  Then for each hyperedge $e \in E$, we increase the counter $c_{uv}$ by $\bmw_G(e)$ for every unordered pair $\set{u,v} \subseteq e$.
  The time complexity for this part is $\sum_{e \in E}|e|^2 \leq \sum_{e \in E}|e|n=pn$.
  To efficiently simulate the process from Line~\ref{line:hypergraph-loop-start} to Line~\ref{line:hypergraph-loop-end} in Algorithm~\ref{alg:hypergraph}, for each hyperedge $e \in E$, we set $\bmw_H(e)$ to be $\frac{\bmw_G(e)X_e}{Kp_e}$, where $X_e$ is sampled from the binomial distribution with a success probability $p_e$ and the number of trials $K$, which can be done in $O(\log K)=O(\log (n\log n/\epsilon^2))$ time~\cite{Devroye:1986zz}.
  Hence, the total time complexity is $O(pn + m \log (n \log n/\epsilon^2) + n^3 \log n/\epsilon^2) = O(pn + m\log(1/\epsilon^2)+n^3\log n/\epsilon^2)$.
\end{proof}

\subsection{Directed Hypergraphs}\label{subsec:directed}
The analysis for directed hypergraphs is almost identical to that for undirected hypergraphs.
For a directed hypergraph $G=(V,E,\bmw)$ and two vertices $u,v\in V$, we define $d_G(u,v)$ as the total weight of hyperarcs $e=(T_e,H_e)$ with $u \in T_e$ and $v \in H_e$.
Note that, in this case, $d_G(u,v) \neq d_G(v,u)$ in general.
The following lemma is analogous to Lemma~\ref{lem:hypergraph-condition}.
\begin{lemma}\label{lem:directed-hypergraph-condition}
  Suppose that
  \[
    p_e \geq \frac{\bmw_G(e)}{\min\limits_{u \in T_e}\min\limits_{v \in H_e}d_G(u,v)}
  \]
  holds for every $e =(T_e,H_e) \in E$.
  Then, Algorithm~\ref{alg:hypergraph} produces an $\epsilon$-spectral sparsifier of $G$ with probability at least $1-\delta$.
\end{lemma}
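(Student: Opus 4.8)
The plan is to mirror the proof of Lemma~\ref{lem:hypergraph-condition} essentially verbatim, replacing the undirected "graph induced by a permutation" construction by its directed analogue and replacing the undirected quadratic form identity by the directed one. Concretely, for a permutation $\pi$ of $[n]$, I would define $G_\pi$ to be the directed graph obtained from $G$ by replacing each hyperarc $e=(T_e,H_e)$ with a single arc $(s_e,t_e)$, where $s_e=\argmin_{u\in T_e}\pi^{-1}_u$ is the tail vertex of $T_e$ appearing earliest in the ordering $\pi_1,\dots,\pi_n$ and $t_e=\argmax_{v\in H_e}\pi^{-1}_v$ is the head vertex of $H_e$ appearing latest. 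The point is that for $\bmx\in\bbR^V_\pi$ we have $\bmx(\pi_1)\ge\dots\ge\bmx(\pi_n)$, so $\max_{u\in T_e}\max_{v\in H_e}\bigl([\bmx(u)-\bmx(v)]^+\bigr)^2$ is achieved precisely by the pair $(s_e,t_e)$: moving earlier in the ordering on the tail side and later on the head side only increases $\bmx(u)-\bmx(v)$, and if even the maximal difference $\bmx(s_e)-\bmx(t_e)$ is negative then every term $[\bmx(u)-\bmx(v)]^+$ vanishes, and so does $(\bmx(s_e)-\bmx(t_e))^2$ evaluated against the \emph{directed} edge's contribution — here I need to be a little careful and use the directed-graph Laplacian quadratic form $\bmx^\top L_{G_\pi}\bmx=\sum_{e}\bmw_G(e)([\bmx(s_e)-\bmx(t_e)]^+)^2$, which for $\bmx\in\bbR^V_\pi$ equals $\sum_e\bmw_G(e)(\bmx(s_e)-\bmx(t_e))^2$ since $s_e$ precedes $t_e$ in $\pi$. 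Thus $\bmx^\top L_G(\bmx)=\bmx^\top L_{G_\pi}\bmx$ for every $\bmx\in\bbR^V_\pi$, and likewise $\bmx^\top L_H(\bmx)=\bmx^\top L_{H_\pi}\bmx$ where $H_\pi$ is obtained from the output hypergraph $H$ by the same replacement. Hence it suffices that $H_\pi$ is an $\epsilon$-spectral sparsifier of $G_\pi$ for vectors in $\bbR^n_\pi$, simultaneously over all $\pi$.

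Next I would invoke Theorem~\ref{the:graph} for each fixed $\pi$. Note that $G_\pi$ and $H_\pi$ are directed graphs, but since we only test against $\bmx\in\bbR^n_\pi$ and for such $\bmx$ the arc $(s_e,t_e)$ always points "downhill" in $\pi$, the directed Laplacian quadratic form agrees with the undirected one $\bmx^\top L^{\mathrm{und}}_{G_\pi}\bmx$; and the sampling performed by Algorithm~\ref{alg:hypergraph} induces exactly the edge-sampling of Algorithm~\ref{alg:graph} on $G_\pi$ with the same $\set{p_e}$ (the $Z_{k,e}$ are shared). The hypothesis of Theorem~\ref{the:graph} requires $p_e\ge \bmw_G(e)/\min_{i,j\in_\pi e}d_{G_\pi,\pi}(i,j)$, so the remaining task is to show that our assumed lower bound $p_e\ge\bmw_G(e)/\min_{u\in T_e}\min_{v\in H_e}d_G(u,v)$ dominates it for every $\pi$, i.e.
\[
  \max_\pi\ \min_{i,j\in_\pi e_\pi}d_{G_\pi,\pi}(i,j)\ \le\ \min_{u\in T_e}\min_{v\in H_e}d_G(u,v).
\]
I would argue this exactly as in Lemma~\ref{lem:hypergraph-condition}: for a fixed $\pi$ the arc $e_\pi=(s_e,t_e)$ crosses the cut between positions $\pi^{-1}_{s_e}$ and $\pi^{-1}_{s_e}+1$ (among others), so $\min_{i,j\in_\pi e_\pi}d_{G_\pi,\pi}(i,j)\le d_{G_\pi,\pi}(\pi^{-1}_{s_e},\pi^{-1}_{t_e}-1)$ if $s_e,t_e$ are adjacent-ish, but cleaner is: the quantity $d_{G_\pi,\pi}(i,j)$ with $i,j$ the innermost crossings of $e_\pi$ counts exactly the arcs $f$ whose induced edge $f_\pi$ spans across both $s_e$ and $t_e$ in $\pi$; every such $f=(T_f,H_f)$ has $s_f$ at-or-before $s_e$ and $t_f$ at-or-after $t_e$, hence $s_e\in[s_f$-side$]$... — the key point is that any arc $f$ counted is one with $u':=$ some tail vertex preceding $s_e$ and $v':=$ some head vertex following $t_e$; in particular $s_e\in T_?$ — no. The right bookkeeping: choose $u^*\in T_e, v^*\in H_e$ achieving $\min_{u\in T_e,v\in H_e}d_G(u,v)$, and choose $\pi$ that places all of $T_e$ first (with $u^*$ at position $1$) and all of $H_e$ last (with $v^*$ at position $n$); then $s_e=u^*$ wait $s_e=\argmin$ so $s_e$ is the first of $T_e$, which I can force to be $u^*$; similarly $t_e=v^*$. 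For this $\pi$, an arc $f$ crosses both innermost crossings of $e_\pi$ iff $f_\pi$ spans from $\le 1$ to $\ge n$... this needs the standard argument that $d_{G_\pi,\pi}$ at the innermost crossings of $e_\pi$ equals the weight of hyperarcs $f$ with $u^*\in$-reachability — I will just transcribe the undirected computation, which shows the max over $\pi$ of the inner min equals $\min_{u\in T_e,v\in H_e}d_G(u,v)$, the only change being that "include both $u$ and $v$" becomes "$u\in T_f$ and $v\in H_f$".

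With that inequality in hand, Theorem~\ref{the:graph} applied with failure probability $\delta/n!$ gives that for each fixed $\pi$, $H_\pi$ is an $\epsilon$-spectral sparsifier of $G_\pi$ for vectors in $\bbR^n_\pi$ with probability $\ge 1-\delta/n!$; since $K=\Theta(\log(n!/\delta)/\epsilon^2)$ in Algorithm~\ref{alg:hypergraph} already accounts for this, a union bound over the at most $n!$ distinct permutations yields that $H_\pi$ works for all $\pi$ simultaneously with probability $\ge 1-\delta$. Combining with the quadratic-form identities from the first paragraph, $H$ is an $\epsilon$-spectral sparsifier of the directed hypergraph $G$ with probability $\ge 1-\delta$, as claimed. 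I expect the only genuinely delicate point to be the directed quadratic-form identity on $\bbR^V_\pi$ — specifically verifying that the $[\,\cdot\,]^+$ truncation is harmless because on $\bbR^V_\pi$ the maximizing tail/head pair $(s_e,t_e)$ is the $\pi$-ordered extreme pair and $\bmx(s_e)\ge\bmx(t_e)$, so $[\bmx(s_e)-\bmx(t_e)]^+=\bmx(s_e)-\bmx(t_e)$ matches the undirected edge weight used by Theorem~\ref{the:graph}; everything else is a routine reduction to the undirected case via $G_\pi$. Consequently the sparsifier size bound $O(n^3\log n/\epsilon^2)$ follows from the directed analogue of Lemma~\ref{lem:hypergraph-total-sum} (with the same telescoping-cost argument over the $n(n-1)$ ordered pairs), exactly as in the proof of Theorem~\ref{the:hypergraph}.
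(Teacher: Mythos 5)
Your overall route is the paper's route (reduce to per-permutation graphs, invoke Theorem~\ref{the:graph}, union bound over the $n!$ permutations), but two of your key steps fail as stated. First, the claim that on $\bbR^V_\pi$ the extreme pair always points ``downhill'' --- that $\bmx(s_e)\ge\bmx(t_e)$, so the $[\cdot]^+$ truncation is harmless and the directed quadratic form agrees with the undirected one you feed to Theorem~\ref{the:graph} --- is false for directed hyperarcs. Nothing prevents every vertex of $H_e$ from preceding every vertex of $T_e$ in $\pi$; then $\pi^{-1}_{s_e}>\pi^{-1}_{t_e}$, the hyperarc contributes $0$ to $\bmx^\top L_G(\bmx)$ for every $\bmx\in\bbR^V_\pi$, yet the undirected edge $\set{s_e,t_e}$ you keep in $G_\pi$ and $H_\pi$ contributes $(\bmx(s_e)-\bmx(t_e))^2$, which can be positive. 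Consequently $\bmx^\top L_G(\bmx)\neq\bmx^\top L^{\mathrm{und}}_{G_\pi}\bmx$ on $\bbR^V_\pi$, and a multiplicative sandwich between the undirected forms (which include these uphill terms, with different realized weights on the $G$ and $H$ sides) does not transfer to $L_G$ and $L_H$. The paper avoids this by \emph{deleting} from $G_\pi$ every hyperarc with $\pi^{-1}_{s_e}\ge\pi^{-1}_{t_e}$; since the deletion is determined by $\pi$ alone and not by the randomness, the sampling of Algorithm~\ref{alg:hypergraph} restricted to the surviving arcs is still exactly Algorithm~\ref{alg:graph} on $G_\pi$, and only after this pruning is the identity $\bmx^\top L_G(\bmx)=\bmx^\top L_{G_\pi}\bmx$ on $\bbR^V_\pi$ correct.

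Second, your domination inequality is stated in the wrong direction. To deduce the hypothesis of Theorem~\ref{the:graph} from $p_e\ge\bmw_G(e)/\min_{u\in T_e}\min_{v\in H_e}d_G(u,v)$ you need the \emph{lower} bound $\min_{i,j\in_\pi e_\pi}d_{G_\pi,\pi}(i,j)\ge\min_{u\in T_e}\min_{v\in H_e}d_G(u,v)$ for every $\pi$ in which $e_\pi$ survives; the inequality you wrote, $\max_\pi\min_{i,j\in_\pi e_\pi}d_{G_\pi,\pi}(i,j)\le\min_{u\in T_e}\min_{v\in H_e}d_G(u,v)$, is the opposite bound (if it were the whole story, your assumed $p_e$ would be too small to invoke Theorem~\ref{the:graph}), and your extremal permutation placing $u^*$ first and $v^*$ last only certifies tightness, i.e.\ the direction you do not need, while your attempt at the needed direction is abandoned mid-argument. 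The needed direction does hold, by the observation your bookkeeping was groping for: if $i$ and $j$ are crossed by $e_\pi$ then $\pi^{-1}_{s_e}\le i,j<\pi^{-1}_{t_e}$, and every hyperarc $f$ counted in $d_G(s_e,t_e)$ (i.e.\ with $s_e\in T_f$ and $t_e\in H_f$) satisfies $\pi^{-1}_{s_f}\le\pi^{-1}_{s_e}$ and $\pi^{-1}_{t_f}\ge\pi^{-1}_{t_e}$, so $f$ survives in $G_\pi$ and $f_\pi$ crosses both $i$ and $j$; hence $d_{G_\pi,\pi}(i,j)\ge d_G(s_e,t_e)\ge\min_{u\in T_e}\min_{v\in H_e}d_G(u,v)$, where the pair $(s_e,t_e)$ indeed lies in $T_e\times H_e$. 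With these two repairs your argument coincides with the paper's proof.
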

\begin{proof}
  For a permutation $\pi$, let $G_\pi$ be the undirected graph obtained from $G$ by replacing each hyperarc $e=(T_e,H_e) \in E$ with an (undirected) edge $e_\pi := \set{s,t}$, where $s_e = \argmin_{u\in T_e} \pi^{-1}_u$ and $t_e = \argmax_{v\in H_e} \pi^{-1}_v$ if $\pi^{-1}_{s_e} < \pi^{-1}_{t_e}$,  and deleting it otherwise.
  Note that, for every $\bmx \in \bbR_\pi^V$, we have
  \begin{align*}
    & \bmx^\top L_G(\bmx)
    =
    \sum_{e \in E}\bmw_G(e) \max_{u \in T_e}\max_{v \in H_e}\bigl([\bmx(u)-\bmx(v)]^+\bigr)^2 \\
    & =
    \sum_{e \in E: \pi^{-1}_{s_e} < \pi^{-1}_{t_e}}\bmw_G(e)\bigl(\bmx(s_e)-\bmx(t_e)\bigr)^2
    =
    \bmx^\top L_{G_\pi}\bmx.
  \end{align*}
  For any $e \in E$, we have
  \[
    \min_\pi \min_{i,j \in_\pi e}d_{G_\pi,\pi}(i,j)
    =
    \min_{\pi:1,n-1\in_\pi e} d_{G_\pi,\pi}(1,n-1)
    =
    \min_{u,v\in e}d_G(u,v).
  \]
  Then, the claim holds by Theorem~\ref{the:graph} and a union bound over the choice of $\pi$.
\end{proof}

The following lemma is analogous to Lemma~\ref{lem:hypergraph-total-sum}
\begin{lemma}\label{lem:directed-hypergraph-total-sum}
  We have
  \[
    \sum_{e \in E}\frac{\bmw_G(e)}{\min\limits_{u \in T_e}\min\limits_{v \in H_e}d_G(u,v)} = O(n^2).
  \]
\end{lemma}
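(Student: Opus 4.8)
The plan is to mimic the proof of Lemma~\ref{lem:hypergraph-total-sum}, replacing unordered pairs of vertices by \emph{ordered} pairs, since in the directed setting $d_G(u,v)$ counts hyperarcs $e=(T_e,H_e)$ with $u\in T_e$ and $v\in H_e$, and this quantity is no longer symmetric. The key point is that for a hyperarc $e$, the denominator $\min_{u\in T_e}\min_{v\in H_e}d_G(u,v)$ is a minimum over ordered pairs $(u,v)\in T_e\times H_e$, so it is natural to charge the cost $\bmw_G(e)/\min_{u\in T_e}\min_{v\in H_e}d_G(u,v)$ to some ordered pair achieving the minimum.

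Concretely, I would enumerate the ordered pairs $(u_1,v_1),\ldots,(u_N,v_N)$ of distinct vertices in nondecreasing order of $d_G(u_i,v_i)$, where $N=n(n-1)$. Then I run the same iterative charging scheme: at step $i$, for every remaining hyperarc $e=(T_e,H_e)$ with $u_i\in T_e$ and $v_i\in H_e$, assign a cost $\bmw_G(e)/d_G(u_i,v_i)$ to $e$ and delete it. Since the pairs are processed in increasing order of $d_G$, at the time a hyperarc $e$ is deleted it is deleted at the first ordered pair $(u,v)\in T_e\times H_e$ in the ordering, which by the sorting is exactly a minimizer of $d_G(u,v)$; hence the cost assigned to $e$ equals $\bmw_G(e)/\min_{u\in T_e}\min_{v\in H_e}d_G(u,v)$. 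At step $i$, the total cost assigned is $\sum_{e:\,u_i\in T_e,\,v_i\in H_e}\bmw_G(e)/d_G(u_i,v_i) \le 1$ by the definition of $d_G(u_i,v_i)$ (the sum of weights of hyperarcs counted in $d_G(u_i,v_i)$ that are still present is at most $d_G(u_i,v_i)$ itself). Summing over the $N=n(n-1)=O(n^2)$ steps gives the bound $O(n^2)$.

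I anticipate no substantial obstacle here; the proof is essentially identical to that of Lemma~\ref{lem:hypergraph-total-sum}, with the only subtlety being the bookkeeping of ordered versus unordered pairs and the need to handle the case where $d_G(u,v)=0$ (i.e.\ no hyperarc has $u$ in its tail and $v$ in its head). For the latter, note that if $d_G(u,v)=0$ then there is no hyperarc $e$ with $u\in T_e$ and $v\in H_e$, so such pairs contribute nothing to the charging scheme and can simply be skipped; in particular every hyperarc $e$ with $\bmw_G(e)>0$ has $\min_{u\in T_e}\min_{v\in H_e}d_G(u,v)\ge \bmw_G(e)>0$, so the denominators appearing in the statement are well defined. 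This completes the proof.

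\begin{proof}
  Enumerate the ordered pairs $(u_1,v_1),\ldots,(u_N,v_N)$ of distinct vertices in $V$ in nondecreasing order of $d_G(u_i,v_i)$, where $N=n(n-1)$; pairs with $d_G(u_i,v_i)=0$ may be placed first and will never be used below.
  Consider the following iterative algorithm with $N$ steps.
  At step $i$, for each remaining hyperarc $e=(T_e,H_e)$ with $u_i \in T_e$ and $v_i \in H_e$, we assign a cost of $\bmw_G(e)/d_G(u_i,v_i)$ to $e$ and delete $e$.

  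Since the pairs are processed in increasing order of $d_G$, a hyperarc $e$ is deleted at the first ordered pair $(u,v) \in T_e \times H_e$ in the ordering, and by the sorting this pair minimizes $d_G(u,v)$ over $(u,v)\in T_e\times H_e$.
  Hence the cost assigned to $e$ equals exactly $\bmw_G(e)/\min_{u \in T_e}\min_{v \in H_e}d_G(u,v)$, so it suffices to upper-bound the total cost assigned to the hyperarcs.
  At step $i$, the total cost assigned is at most
  \[
    \frac{1}{d_G(u_i,v_i)}\sum_{\substack{e=(T_e,H_e):\,u_i \in T_e,\,v_i \in H_e}}\bmw_G(e) \le \frac{d_G(u_i,v_i)}{d_G(u_i,v_i)} = 1,
  \]
  because the sum of weights of hyperarcs $e$ with $u_i\in T_e$ and $v_i\in H_e$ is, by definition, exactly $d_G(u_i,v_i)$ before any deletions and hence at most $d_G(u_i,v_i)$ among the remaining ones.
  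Since the number of steps is $N = n(n-1) = O(n^2)$, the total cost is $O(n^2)$.
\end{proof}
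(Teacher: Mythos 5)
Your proof is correct and follows essentially the same argument as the paper: ordering the $n(n-1)$ ordered pairs by $d_G$, running the same charging-and-deletion scheme, and bounding the cost per step by $1$. The extra remarks about zero-valued pairs and the tie to the minimizer are harmless refinements of the same proof.
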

\begin{proof}
  Let $(u_1,v_1),\ldots,(u_m,v_m)$ be a sequence of (ordered) pairs of vertices in increasing order of $d_G(u_i,v_i)$, where $m = n(n-1)$.
  Then, consider the following iterative algorithm with $m$ steps.
  At step $i$, for each (remaining) hyperarc $e=(T_e,H_e)\in E$ with $u_i \in T_e$ and $v_i \in H_e$, we assign a cost of $\bmw_G(e)/d_G(u_i,v_i)$ to $e$ and delete $e$.

  The cost assigned to a hyperarc $e=(T_e,H_e)$ is exactly $\bmw_G(e)/\min_{u \in T_e}\min_{v \in H_e}d_G(u,v)$;
  hence we want to upper-bound the total cost assigned to the hyperarcs.
  This can be bounded by $O(n^2)$ because the total cost assigned to the hyperarcs at step $i$ is at most $1$ and the number of steps is $m=O(n^2)$.
\end{proof}

The proof of Theorem~\ref{the:directed-hypergraph} is identical to that of Theorem~\ref{the:hypergraph}, where we use Lemmas~\ref{lem:directed-hypergraph-condition} and~\ref{lem:directed-hypergraph-total-sum} instead of Lemmas~\ref{lem:hypergraph-condition} and~\ref{lem:hypergraph-total-sum}.

\section{Applications}\label{sec:application}
In this section, we discuss applications of spectral sparsification of hypergraphs.

\subsection{Agnostic Learning of Submodular Functions}

We now discuss an application of Corollary~\ref{cor:approximation} to agnostic learning.

We first define agnostic learning for the case in which the domain is $2^V$.
Let $\caD$ be an arbitrary distribution on $2^V \times [0,1]$.
Let $\caC$ be a class of $[0,1]$-valued functions on $2^V$.
Define the \emph{error} of $f\colon 2^V \to [0,1]$ and the \emph{optimal error} of $\caC$ as
\[
  \mathrm{err}(f) = \E_{(S,b) \sim \caD}|f(S) - b|, \quad \mathrm{opt} = \min_{f\in \caC} \mathrm{err}(f),
\]
respectively.
Roughly speaking, the objective of agnostic learning of a concept class $\caC$ is to find a hypothesis with an error not much larger than the optimal error by drawing a small number of samples from $\caD$.
Formally, we define agnostic learning as follows:
\begin{definition}[Agnostic learning]
  A concept class $\caC$ is said to be \emph{agnostically learnable} if, for any $\epsilon > 0$ and $\delta \in (0,1)$, there exists an algorithm that, given a sampling oracle from $\caD$, outputs a hypothesis $h\colon 2^V \to [0,1]$ with probability at least $1 - \delta$ such that $\mathrm{err}(h) \leq \mathrm{opt} + \epsilon$.
\end{definition}
The following is an easy consequence of Chernoff's bound and the union bound.
\begin{lemma}\label{lem:agnostic-learning-trivial}
  A concept class $\caC$ of $[0,1]$-valued functions is agnostically learnable with $O\bigl(\log (|\caC|/\delta)/\epsilon^2\bigr)$ samples.
\end{lemma}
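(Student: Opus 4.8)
The plan is to apply a straightforward empirical risk minimization argument combined with a uniform convergence bound over the finite class $\caC$. First I would fix $\epsilon > 0$ and $\delta \in (0,1)$, draw $N = O\bigl(\log(|\caC|/\delta)/\epsilon^2\bigr)$ i.i.d.\ samples $(S_1,b_1),\ldots,(S_N,b_N)$ from $\caD$, and for each $f \in \caC$ define the empirical error $\widehat{\mathrm{err}}(f) = \frac{1}{N}\sum_{i=1}^N |f(S_i) - b_i|$. The algorithm outputs $h = \argmin_{f \in \caC} \widehat{\mathrm{err}}(f)$.

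The key step is to show that $|\widehat{\mathrm{err}}(f) - \mathrm{err}(f)| \le \epsilon/2$ simultaneously for all $f \in \caC$ with probability at least $1 - \delta$. For a fixed $f$, the quantities $|f(S_i) - b_i|$ are i.i.d.\ random variables in $[0,1]$ (since $f$ is $[0,1]$-valued and $b_i \in [0,1]$), with mean $\mathrm{err}(f)$, so Chernoff's bound (a Hoeffding-type inequality; one can invoke Lemma~\ref{lem:chernoff} after rescaling, or use an additive Hoeffding bound directly) gives $\Pr\bigl[|\widehat{\mathrm{err}}(f) - \mathrm{err}(f)| > \epsilon/2\bigr] \le 2\exp(-\Omega(N\epsilon^2))$. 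Choosing the hidden constant in $N$ large enough makes this at most $\delta/|\caC|$, and a union bound over the $|\caC|$ functions in the class yields the uniform guarantee.

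Finally, conditioned on this uniform convergence event, let $f^\star = \argmin_{f \in \caC}\mathrm{err}(f)$ so that $\mathrm{err}(f^\star) = \mathrm{opt}$. Then $\mathrm{err}(h) \le \widehat{\mathrm{err}}(h) + \epsilon/2 \le \widehat{\mathrm{err}}(f^\star) + \epsilon/2 \le \mathrm{err}(f^\star) + \epsilon = \mathrm{opt} + \epsilon$, where the middle inequality uses that $h$ minimizes the empirical error. This holds with probability at least $1 - \delta$, which is exactly the definition of agnostic learnability with $O\bigl(\log(|\caC|/\delta)/\epsilon^2\bigr)$ samples.

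There is no real obstacle here; the only mild subtlety is that the definition of agnostic learning in the excerpt requires the hypothesis $h$ to be $[0,1]$-valued, which is automatic since $h \in \caC$ and $\caC$ consists of $[0,1]$-valued functions, so this is a proper learner. One should also note that the algorithm need not be efficient — it iterates over all of $\caC$ — but the lemma only asserts a sample complexity bound, so this is not a concern for the statement being proved.
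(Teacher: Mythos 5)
Your proof is correct and is exactly the argument the paper has in mind: the paper gives no explicit proof, only the remark that the lemma is "an easy consequence of Chernoff's bound and the union bound," which is precisely your empirical-risk-minimization argument with a Hoeffding/Chernoff bound per function and a union bound over the finite class. Your side remarks (using an additive Hoeffding-type bound rather than the paper's multiplicative Lemma~\ref{lem:chernoff} as stated, and noting that the learner is proper but not necessarily efficient) are appropriate and do not affect correctness.
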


\begin{proof}[Proof of Corollary~\ref{cor:agnostic-learning}]
  Let $f\colon 2^V \to \bbR$ be a nonnegative submodular function with $f(\emptyset)=f(V)=0$.
  We first show that $f$ can be well approximated by a sparse directed hypergraph with small weights.
  Let $G=(V,E_G,\bmw_G)$ be a directed hypergraph such that $f(S)=\kappa_G(S)$ for every $S \subseteq V$ for the cut function $\kappa_G\colon 2^V \to \bbR_+$ of $G$, whose existence is guaranteed by Corollary~\ref{cor:approximation}.
  From the proof of Corollary~\ref{cor:approximation} in~\cite{Fujishige2001}, we can observe that $0 \leq \bmw_G(e) \leq \sum_{v \in V}f(v) \leq n$.
  Let $M=O(n^3\log n/\epsilon^2)$ be the upper bound on the expected number of hyperarcs in Theorem~\ref{the:directed-hypergraph}.
  By Theorem~\ref{the:directed-hypergraph}, there exists a directed hypergraph $H=(V,E_H,\bmw_H)$ of at most $M$ hyperarcs such that $(1-\epsilon)\kappa_H(S) \leq \kappa_G(S) \leq (1+\epsilon)\kappa_H(S)$ for every $S \subseteq V$.
  Moreover, from the construction of $H$ (Algorithm~\ref{alg:hypergraph}), we know that $E_H \subseteq E_G$ and
  \[
      \bmw_H(e) \leq \frac{\bmw_G(e)}{p_e} \leq n \cdot n^2 = n^3
  \]
  for every $e \in E_H$, where in the second inequality we used $\bmw_G(e) \leq n$ and Lemma~\ref{lem:directed-hypergraph-total-sum}.

  Let $\caC$ be the class of submodular functions $f\colon 2^V \to [0,1]$ with $f(\emptyset)=f(V)=0$ and let $\caC'$ be the class of cut functions of directed hypergraphs with at most $M$ hyperarcs and each arc weight being a multiple of $\epsilon/M$.
  Then, for every $f \in \caC$, there exists $g \in \caC'$ such that $|f(S)-g(S)| \leq \epsilon$ for every $S \subseteq V$.
  Hence, it suffices to show that $\caC'$ is agnostically learnable.
  Note that we have
  \[
    |\caC'| \leq O\left(\sum_{k =0}^M \binom{2^{2n}}{k} \left(\frac{Mn^3}{\epsilon}\right)^k\right),
  \]
  which implies that
  \[
    \log |\caC'| = O\left(M \log 2^{2n}+M\log \frac{Mn^3}{\epsilon}\right) = O\left(\frac{n^4}{\epsilon^2}\log \frac{n}{\epsilon}\right).
  \]
  The claim follows by Lemma~\ref{lem:agnostic-learning-trivial}.
\end{proof}

\subsection{Effective Resistance of Hypergraphs}
As with graphs, we can define the effective resistance between a pair of vertices in a hypergraph.
For $v \in V$, let $\bme_v \in \bbR^V$ be the unit vector with $\bme_v(v)=1$ and $\bme_v(w)=0$ for $w \neq v$.
\begin{definition}[Effective Resistance~\cite{Fujii2018:submodLxb}]
  Let $G=(V,E,\bmw)$ be an undirected/directed hypergraph.
  For distinct vertices $s, t \in V$, the \emph{effective resistance} $R_G(s,t)$ between $s$ and $t$ is the maximum value of
  \begin{align}\label{eq:effres-1}
    2(\bme_s - \bme_t)^\top \bmx - \bmx^\top L_G(\bmx)
  \end{align}
  for $\bmx \in \bbR^V$.
\end{definition}

The effective resistance has the following physics interpretation.
Let us regard a hypergraph as an electric circuit, where in the undirected case, each hyperedge $e \in E$ acts as an imaginary unit such that the current flows from the vertex in $e$ with the highest potential to that with the lowest and the resistance of $e$ is $\bmw(e)$, and in the directed case, each hyperarc $e=(T_e,S_e) \in E$ acts as an imaginary unit such that the current flows from the vertex in $T_e$ with the highest potential to the vertex in $H_e$ with the lowest and the resistance of $e$ is $\bmw(e)$.
The effective resistance is equal to the potential difference of $s$ and $t$ when an unit electricity is injected to $s$ and out from $t$.
The effective resistance of hypergraphs have applications in network analysis and semi-supervised learning (see~\cite{Fujii2018:submodLxb}).

An equivalent formulation of the effective resistance is the following.

\begin{lemma}
  Let $G=(V,E,\bmw)$ be an undirected/directed hypergraph.
  Then, we have
  \begin{align}\label{eq:effres-2}
    R_G(s,t) = \min \{ \bmx^\top L_G(\bmx) : \bmx(s) = 1, \bmx(t) = -1 \}.
  \end{align}
\end{lemma}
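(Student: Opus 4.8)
The proof rests on three properties of the functional $q(\bmx) := \bmx^\top L_G(\bmx)$, all read off directly from its closed form in both the undirected and the directed case: $q$ is nonnegative; $q$ is convex and positively homogeneous of degree two, being a nonnegative combination of squared suprema of linear forms; and $q(\bmx + c\bmone) = q(\bmx)$ for every $c \in \bbR$. Writing $\bmb := \bme_s - \bme_t$, the definition \eqref{eq:effres-1} reads $R_G(s,t) = \sup_{\bmx}\{2\bmb^\top\bmx - q(\bmx)\}$, so what must be shown is that this supremum equals the constrained minimum in \eqref{eq:effres-2}. We may assume $s$ and $t$ lie in the same connected component of $G$; otherwise $R_G(s,t) = +\infty$ by the usual convention.

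I would establish \eqref{eq:effres-2} by two matching inequalities, each collapsed to a one-variable optimization using degree-two homogeneity. For the direction $R_G(s,t) \ge \min\{\cdots\}$: let $\bmz$ attain $m := \min\{q(\bmx) : \bmx(s)=1,\ \bmx(t)=-1\}$ — the minimum is attained since $q$ is continuous and, in the connected case, coercive on the affine set $\{\bmx : \bmx(s)=1,\ \bmx(t)=-1\}$, whose direction space $\{\bmv : \bmv(s)=\bmv(t)=0\}$ meets the null cone $\{\,q = 0\,\} = \bbR\bmone$ of $q$ only at the origin. Substituting the ray $\{\lambda\bmz : \lambda \ge 0\}$ into \eqref{eq:effres-1} turns $2\bmb^\top(\lambda\bmz) - q(\lambda\bmz)$ into a concave quadratic in the single variable $\lambda$; maximizing over $\lambda$ and reading off the resulting optimal value gives $R_G(s,t) \ge m$. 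For the direction $R_G(s,t) \le \min\{\cdots\}$: let $\bmx^\star$ attain the supremum in \eqref{eq:effres-1}. The optimality condition $2\bmb \in \partial q(\bmx^\star)$ together with Euler's relation $\langle\partial q(\bmx^\star),\bmx^\star\rangle = 2\,q(\bmx^\star)$ for the degree-two homogeneous $q$ yields $q(\bmx^\star) = \bmb^\top\bmx^\star = R_G(s,t)$, so $\bmx^\star(s) \ne \bmx^\star(t)$ whenever $R_G(s,t) > 0$; a suitable positive rescaling of $\bmx^\star$ followed by a translation along $\bmone$ — operations under which $q$ and $\bmb^\top(\cdot)$ transform transparently — produces a vector that is feasible for \eqref{eq:effres-2} and whose objective equals $R_G(s,t)$, so $m \le R_G(s,t)$. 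Combining the two bounds gives \eqref{eq:effres-2}. Equivalently, the identity can be obtained in one shot by Fenchel duality: $R_G(s,t) = q^\ast(2\bmb)$ is the convex conjugate of $q$ at $2\bmb$, itself degree-two homogeneous, while strong duality for $\min\{q(\bmx):\bmx(s)=1,\bmx(t)=-1\}$ — valid because $q$ is closed convex and $\bmone_{\{s\}}-\bmone_{\{t\}}$ is feasible — expresses the minimum as $\sup\{\alpha - \beta - q^\ast(\alpha\bme_s+\beta\bme_t) : \alpha+\beta = 0\}$, the restriction $\alpha+\beta=0$ being forced because $\bmone$-invariance makes $q^\ast$ infinite off $\{\bmone^\top\bmy = 0\}$; homogeneity of $q^\ast$ then identifies this dual with $q^\ast(2\bmb)$.

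The main obstacle — and the reason the graph identity $R_G(s,t) = (\bme_s - \bme_t)^\top L_G^\dagger (\bme_s - \bme_t)$ does not carry over verbatim — is that $L_G$ is nonlinear: there is no pseudoinverse and no closed-form optimizer, so every step has to be conducted within convex analysis, with degree-two homogeneity doing the work of reducing each extremal problem to a single scalar variable and translation-invariance pinning down the Lagrange multipliers. Two further points require care: one must verify that in the connected case both the supremum in \eqref{eq:effres-1} and the minimum in \eqref{eq:effres-2} are attained, so that ``$\max$'' and ``$\min$'' are literally justified; and in the directed case $q$ is positively homogeneous only (and not symmetric), but this is harmless, since negative scalings $\lambda\bmz$ with $\lambda < 0$ only decrease $2\bmb^\top\bmx - q(\bmx)$ and hence never affect the supremum in \eqref{eq:effres-1}.
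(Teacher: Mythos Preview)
Your homogeneity/convexity framework is the right one, but the one-variable optimization does not land where you say it does. Take any $\bmz$ feasible for \eqref{eq:effres-2}, so that $\bmb^\top\bmz=\bmz(s)-\bmz(t)=2$, and substitute $\lambda\bmz$ into \eqref{eq:effres-1}: you get $4\lambda-\lambda^2 q(\bmz)$, whose maximum over $\lambda\ge 0$ is $4/q(\bmz)$, not $q(\bmz)$. So the ray step actually yields $R_G(s,t)\ge 4/m$. The other direction has the same slip: from Euler's relation you correctly obtain $q(\bmx^\star)=\bmb^\top\bmx^\star=R_G(s,t)$, but to make $\bmx^\star$ feasible you must scale by $2/(\bmx^\star(s)-\bmx^\star(t))=2/R_G(s,t)$, after which the objective becomes $(2/R_G(s,t))^2\,q(\bmx^\star)=4/R_G(s,t)$, giving $m\le 4/R_G(s,t)$. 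What your argument proves is therefore $R_G(s,t)\cdot m=4$, not $R_G(s,t)=m$. A single edge $\{s,t\}$ of weight $w$ already exhibits the discrepancy: \eqref{eq:effres-1} gives $R_G(s,t)=1/w$, while the constrained minimum in \eqref{eq:effres-2} is $w(1-(-1))^2=4w$.

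This is not a flaw peculiar to your approach. The paper's proof asserts that $L_G(\bmx)=\tfrac{\lambda}{2}(\bme_s-\bme_t)$ together with $\bmx(s)=1,\ \bmx(t)=-1$ forces $\lambda=2$ (``the flow is a unit flow''), but on the single-edge example one finds $\lambda=4w$, so that step is unjustified for the same reason. In short, the identity as printed appears to be misstated; the correct relation your method establishes is $R_G(s,t)=4/\min\{\bmx^\top L_G(\bmx):\bmx(s)=1,\ \bmx(t)=-1\}$ (equivalently, $R_G(s,t)^{-1}=\min\{\bmx^\top L_G(\bmx):\bmx(s)-\bmx(t)=1\}$). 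Fortunately the downstream corollary about $\epsilon$-spectral sparsifiers is unaffected, since it only needs the constrained minimum to be preserved up to a $(1\pm\epsilon)$ factor.
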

\begin{proof}
    Introducing Lagrange multipliers $\lambda_s$ and $\lambda_t$, an optimal solution $\bmx \in \bbR^V$ of~\eqref{eq:effres-2} must satisfy $2L_G(\bmx) - \lambda_s \bme_s - \lambda_t \bme_t = 0$.
    Therefore, $L_G(\bmx) = \frac{1}{2}(\lambda_s \bme_s + \lambda_t \bme_t)$.
    This equality has a solution if $(\lambda_s \bme_s + \lambda_t \bme_t)^\top \bmone =  0$, which implies $\lambda_s + \lambda_t = 0$.
    Then, we can rewrite the condition as $L_G(\bmx) = \frac{\lambda}{2}(\bme_s - \bme_t)$ for some $\lambda \in \bbR$.
    Therefore, $\bmx$ is a potential corresponding to an $st$-flow.
    Furthermore, since $\bmx(s) = 1$ and $\bmx(t) = -1$, the flow is a unit flow and we must have $\lambda = 2$.
    Hence an optimal solution $\bmx$ satisfies $L_G(\bmx) = \bme_s - \bme_t$, which is in turn the optimality condition of \eqref{eq:effres-1}.
    Evidently, the optimal value of \eqref{eq:effres-1} is equal to $\bmx^\top L_G(\bmx)$.
\end{proof}


An immediate corollary is the following.
\begin{corollary}
  If $H$ is an $\epsilon$-spectral sparsifier of $G$, then for distinct $s, t \in V$, we have $(1-\epsilon) R_H(s,t) \leq R_G(s,t) \leq (1+\epsilon) R_H(s,t)$.
\end{corollary}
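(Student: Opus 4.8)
The plan is to derive the corollary directly from the variational characterization of effective resistance in~\eqref{eq:effres-1}, using the spectral sparsifier inequality~\eqref{eq:spectral-sparsification-of-hypergraphs} pointwise. The key observation is that $R_G(s,t)$ is obtained by maximizing the functional $F_G(\bmx) := 2(\bme_s-\bme_t)^\top\bmx - \bmx^\top L_G(\bmx)$ over $\bmx\in\bbR^V$, and the only $G$-dependent part is the quadratic term $\bmx^\top L_G(\bmx)$, which is sandwiched between $(1-\epsilon)\bmx^\top L_H(\bmx)$ and $(1+\epsilon)\bmx^\top L_H(\bmx)$ for every $\bmx$. Since the quadratic form enters with a minus sign, the inequality $\bmx^\top L_G(\bmx)\le(1+\epsilon)\bmx^\top L_H(\bmx)$ gives $F_G(\bmx)\ge 2(\bme_s-\bme_t)^\top\bmx-(1+\epsilon)\bmx^\top L_H(\bmx)$, and the reverse bound gives the matching upper estimate.

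First I would fix distinct $s,t\in V$ and let $\bmx^*$ be an optimal solution for $R_G(s,t)$, i.e.\ $R_G(s,t)=F_G(\bmx^*)$. To get an \emph{upper} bound on $R_G(s,t)$ in terms of $R_H$, I would exploit $\bmx^\top L_G(\bmx)\ge(1-\epsilon)\bmx^\top L_H(\bmx)$, which yields
\[
  R_G(s,t) = 2(\bme_s-\bme_t)^\top\bmx^* - (\bmx^*)^\top L_G(\bmx^*) \le 2(\bme_s-\bme_t)^\top\bmx^* - (1-\epsilon)(\bmx^*)^\top L_H(\bmx^*).
\]
A minor wrinkle is that the right-hand side is not quite $F_H$ evaluated at a point, because of the factor $(1-\epsilon)$ multiplying the quadratic term. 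I would handle this by rescaling: substitute $\bmx^* = \bmz/(1-\epsilon)$, so that
\[
  2(\bme_s-\bme_t)^\top\bmx^* - (1-\epsilon)(\bmx^*)^\top L_H(\bmx^*) = \frac{1}{1-\epsilon}\Bigl(2(\bme_s-\bme_t)^\top\bmz - \bmz^\top L_H(\bmz)\Bigr) \le \frac{1}{1-\epsilon} R_H(s,t),
\]
using $\bmx^\top L_G(\bmx)$ being $2$-homogeneous (it scales by $c^2$ under $\bmx\mapsto c\bmx$) and the linearity of the first term. One then checks that $1/(1-\epsilon)\le 1+\epsilon'$ for a comparable $\epsilon'$; alternatively, since the corollary as stated writes $R_G(s,t)\le(1+\epsilon)R_H(s,t)$, I would note that $\tfrac{1}{1-\epsilon}\le 1+2\epsilon$ for $\epsilon\in(0,1/2)$ and absorb the constant, or simply restate with the cleaner two-sided bound $\tfrac{1}{1+\epsilon}R_H\le R_G\le\tfrac{1}{1-\epsilon}R_H$, which is what the argument actually gives. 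The \emph{lower} bound is symmetric: taking $\bmx^\dagger$ optimal for $R_H(s,t)$ and using $\bmx^\top L_G(\bmx)\le(1+\epsilon)\bmx^\top L_H(\bmx)$ together with the rescaling $\bmx\mapsto\bmx/(1+\epsilon)$ gives $R_H(s,t)\le(1+\epsilon)R_G(s,t)$, i.e.\ $R_G(s,t)\ge R_H(s,t)/(1+\epsilon)$.

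The main (and essentially only) obstacle is the bookkeeping around the multiplicative factors: the spectral inequality is two-sided with constants $1\pm\epsilon$, but effective resistance is a max of an affine-minus-quadratic functional, so the naive substitution does not immediately produce $1\pm\epsilon$ on the resistances — one needs the homogeneity rescaling trick above, and then a routine conversion between $\tfrac{1}{1\mp\epsilon}$ and $1\pm O(\epsilon)$. There is no real analytic difficulty: the argument uses only that $\bmx^\top L_G(\bmx)$ is homogeneous of degree $2$ (clear from either \eqref{eq:quadratic-form-hypergraph} or its directed analogue) and the defining inequality of an $\epsilon$-spectral sparsifier, applied at the single optimal vector for each direction. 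I would present it as two short displayed chains of inequalities, one for each direction, and remark that the same proof works verbatim for undirected and directed hypergraphs since it only invokes the abstract properties of $L_G$.
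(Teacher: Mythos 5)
Your general idea---plug the two-sided sparsifier inequality into a variational characterization of the effective resistance---is the right one, but you chose the maximization form~\eqref{eq:effres-1}, and that choice creates a genuine gap in one direction. Because of the linear term $2(\bme_s-\bme_t)^\top\bmx$, your rescaling argument yields
\[
  \frac{1}{1+\epsilon}\,R_H(s,t) \;\le\; R_G(s,t) \;\le\; \frac{1}{1-\epsilon}\,R_H(s,t),
\]
and while $\tfrac{1}{1+\epsilon}\ge 1-\epsilon$ rescues the lower bound, the upper bound $\tfrac{1}{1-\epsilon}R_H(s,t)$ is strictly weaker than the claimed $(1+\epsilon)R_H(s,t)$ for every $\epsilon\in(0,1)$. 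Your proposed remedies (restricting to $\epsilon<1/2$ and absorbing a factor $2$, or restating the corollary with $\tfrac{1}{1\mp\epsilon}$ factors) do not prove the statement as written---they prove a different statement. This loss is intrinsic to the route through~\eqref{eq:effres-1}: the crude substitution of $\bmx^\top L_G(\bmx)\lessgtr(1\pm\epsilon)\bmx^\top L_H(\bmx)$ into an affine-minus-quadratic functional cannot recover the exact $1\pm\epsilon$ factors.

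The corollary is stated immediately after the lemma giving the equivalent formulation~\eqref{eq:effres-2}, and that is the intended (one-line) proof: there $R_G(s,t)=\min\{\bmx^\top L_G(\bmx) : \bmx(s)=1,\ \bmx(t)=-1\}$, the feasible set does not depend on the hypergraph, and the objective is exactly the quadratic form, so no rescaling is needed. Letting $\bmx_G$ and $\bmx_H$ be respective minimizers, $R_G(s,t)=\bmx_G^\top L_G(\bmx_G)\ge(1-\epsilon)\bmx_G^\top L_H(\bmx_G)\ge(1-\epsilon)R_H(s,t)$, and $R_G(s,t)\le\bmx_H^\top L_G(\bmx_H)\le(1+\epsilon)\bmx_H^\top L_H(\bmx_H)=(1+\epsilon)R_H(s,t)$. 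Switching your argument to~\eqref{eq:effres-2} gives the stated bounds verbatim, for both the undirected and directed cases, with no bookkeeping at all.
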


\subsection{Faster Semi-Supervised Learning on Hypergraphs}
Zhang~\emph{et~al.}~\cite{Zhang:2017va} studied the following semi-supervised learning problem on directed hypergraphs.
Suppose that we are given a directed hypergraph $G=(V,E,\bmw)$, which represents a directional dependence of vertices.
We are given labels $\bmx^*(u)$ of vertices $u$ in $L \subseteq V$, and the task is to predict labels $\bmx(v) \in [0,1]$ in $ v \in V \setminus L$, respecting the structure of the hypergraph.
Intuitively, if a vertex $s$ is downstream\footnote{We say that $s$ is downstream of $t$ if there exists a directed path in $G$ from $t$ to $s$.} of another vertex $t$ in $G$, it is likely to hold that $\bmx(s) \leq \bmx(t)$.
The formulation considered in~\cite{Zhang:2017va} is as follows:
\begin{align*}
    \min        & \quad \sum_{e=(T_e,H_e)\in E} \bmw(e)\max_{u \in T_e}\max_{v \in H_e}\bigl([\bmx(u) - \bmx(v)]^+\bigr)^2 \\
    \text{s.t.} & \quad \bmx(u) = \bmx^*(u) \quad (u \in L).
\end{align*}
Since the objective function is $\bmx^\top L_G(\bmx)$, we can use our spectral sparsifiers for solving this problem faster.
More specifically, in~\cite{Zhang:2017va}, they proposed a subgradient descent algorithm for this problem, which computes a subgradient at a given point in $O(\size(G))$ time.
For simplicity, assume that $G$ is $r$-regular.
Then our hypergraph sparsification reduces the complexity of computing a subgradient to $\widetilde{O}(n^3 r)$ time, which is an improvement if $m = \omega(n^3)$.

\bibliographystyle{abbrv}
\bibliography{main}

\begin{thebibliography}{10}

\bibitem{AllenZhu:2015cm}
Z.~Allen-Zhu, Z.~Liao, and L.~Orecchia.
\newblock Spectral sparsification and regret minimization beyond matrix
  multiplicative updates.
\newblock In {\em Proceedings of the 47th Annual ACM on Symposium on Theory of
  Computing (STOC)}, pages 237--245, 2015.

\bibitem{Batson:2014io}
J.~Batson, D.~A. Spielman, and N.~Srivastava.
\newblock Twice-{Ramanujan} sparsifiers.
\newblock {\em SIAM Review}, 56(2):315--334, 2014.

\bibitem{Benczur:1996fs}
A.~A. Bencz{\'u}r and D.~R. Karger.
\newblock Approximating $s$-$t$ minimum cuts in $\tilde{O}(n^2)$ time.
\newblock In {\em Proceedings of the 28th Annual ACM symposium on Theory of
  computing (STOC)}, pages 47--55, 1996.

\bibitem{Benczur:2015bm}
A.~A. Bencz{\'u}r and D.~R. Karger.
\newblock Randomized approximation schemes for cuts and flows in capacitated
  graphs.
\newblock {\em SIAM Journal on Computing}, 44(2):290--319, 2015.

\bibitem{Boyd2004convex}
S.~Boyd and L.~Vandenberghe.
\newblock {\em Convex optimization}.
\newblock Cambridge university press, 2004.

\bibitem{Chekuri:2017td}
C.~Chekuri and C.~Xu.
\newblock A note on approximate strengths of edges in a hypergraph.
\newblock {\em arxiv}, 2017.

\bibitem{Cheraghchi:2012up}
M.~Cheraghchi, A.~Klivans, P.~Kothari, and H.~K. Lee.
\newblock Submodular functions are noise stable.
\newblock In {\em Proceedings of the 23rd Annual ACM-SIAM Symposium on Discrete
  Algorithms (SODA)}, pages 1586--1592, 2012.

\bibitem{Cohen:2017fy}
M.~B. Cohen, J.~Kelner, J.~Peebles, R.~Peng, A.~B. Rao, A.~Sidford, and
  A.~Vladu.
\newblock Almost-linear-time algorithms for markov chains and new spectral
  primitives for directed graphs.
\newblock In {\em Proceedings of the 49th Annual ACM Symposium on Theory of
  Computing (STOC)}, pages 410--419, 2017.

\bibitem{Silva2015}
M.~K. de~Carli~Silva, N.~J.~A. Harvey, and C.~M. Sato.
\newblock Sparse sums of positive semidefinite matrices.
\newblock {\em ACM Transactions on Algorithms}, 12(1):9:1----9:17, 2015.

\bibitem{Devroye:1986zz}
L.~Devroye.
\newblock {\em Non-Uniform Random Variate Generation}.
\newblock Springer, 1986.

\bibitem{feldman2013representation}
V.~Feldman, P.~Kothari, and J.~Vondr{\'a}k.
\newblock Representation, approximation and learning of submodular functions
  using low-rank decision trees.
\newblock In {\em Proceedings of the 26th Annual Conference on Learning Theory
  (COLT)}, pages 711--740, 2013.

\bibitem{Fujii2018:submodLxb}
K.~Fujii, T.~Soma, and Y.~Yoshida.
\newblock Polynomial-time algorithms for submodular {Laplacian} systems.
\newblock {\em arxiv}, 2018.

\bibitem{Fujishige2001}
S.~Fujishige and S.~B. Patkar.
\newblock Realization of set functions as cut functions of graphs and
  hypergraphs.
\newblock {\em Discrete Mathematics}, 226(1):199 -- 210, 2001.

\bibitem{Fung:2011gp}
W.~S. Fung, R.~Hariharan, N.~J.~A. Harvey, and D.~Panigrahi.
\newblock {A general framework for graph sparsification}.
\newblock In {\em Proceedings of the 44th Annual ACM Symposium on Theory of
  Computing (STOC)}, pages 71--80, 2011.

\bibitem{Gallo:1993jc}
G.~Gallo, G.~Longo, S.~Pallottino, and S.~Nguyen.
\newblock Directed hypergraphs and applications.
\newblock {\em Discrete Applied Mathematics}, 42(2-3):177--201, 1993.

\bibitem{gupta2013privately}
A.~Gupta, M.~Hardt, A.~Roth, and J.~Ullman.
\newblock Privately releasing conjunctions and the statistical query barrier.
\newblock {\em SIAM Journal on Computing}, 42(4):1494--1520, 2013.

\bibitem{Jambulapati:2018va}
A.~Jambulapati and A.~Sidford.
\newblock Efficient $\tilde{O}(n/\epsilon)$ spectral sketches for the
  {Laplacian} and its pseudoinverse.
\newblock In {\em Proceedings of the 29th Annual ACM-SIAM Symposium on Discrete
  Algorithms (SODA)}, pages 2487--2503, 2018.

\bibitem{Karger:2002du}
D.~R. Karger and M.~S. Levine.
\newblock Random sampling in residual graphs.
\newblock In {\em Proceedings of the 34th Annual ACM Symposium on Theory of
  Computing (STOC)}, pages 63--66, 2002.

\bibitem{Kogan:2015by}
D.~Kogan and R.~Krauthgamer.
\newblock Sketching cuts in graphs and hypergraphs.
\newblock In {\em Proceedings of the 2015 Conference on Innovations in
  Theoretical Computer Science (ITCS)}, pages 367--376, 2015.

\bibitem{Koutis2014}
I.~Koutis, G.~L. Miller, and R.~Peng.
\newblock Approaching optimality for solving {SDD} linear systems.
\newblock {\em SIAM Journal on Computing}, 43(1):337--354, 2014.

\bibitem{Lee:2015ex}
Y.~T. Lee and H.~Sun.
\newblock Constructing linear-sized spectral sparsification in almost-linear
  time.
\newblock In {\em Proceedings of the IEEE 56th Annual Symposium on Foundations
  of Computer Science (FOCS)}, pages 250--269, 2015.

\bibitem{Lee:2017bu}
Y.~T. Lee and H.~Sun.
\newblock An {SDP}-based algorithm for linear-sized spectral sparsification.
\newblock In {\em Proceedings of the 49th Annual ACM Symposium on Theory of
  Computing (STOC)}, pages 678--687, 2017.

\bibitem{Louis:2015tg}
A.~Louis.
\newblock Hypergraph {Markov} operators, eigenvalues and approximation
  algorithms.
\newblock In {\em Proceedings of the 47th Annual ACM on Symposium on Theory of
  Computing (STOC)}, pages 713--722, 2015.

\bibitem{Madry:2010fn}
A.~Madry.
\newblock Fast approximation algorithms for cut-based problems in undirected
  graphs.
\newblock In {\em Proceedings of the 51st Annual IEEE Symposium on Foundations
  of Computer Science (FOCS)}, pages 245--254, 2010.

\bibitem{mansour1995nlog}
Y.~Mansour.
\newblock An {$O(n^{\log \log n})$} learning algorithm for {DNF} under the
  uniform distribution.
\newblock {\em Journal of Computer and System Sciences}, 50(3):543--550, 1995.

\bibitem{Newman:2013ks}
I.~Newman and Y.~Rabinovich.
\newblock On multiplicative $\lambda$-approximations and some geometric
  applications.
\newblock {\em SIAM Journal on Computing}, 42(3):855--883, 2013.

\bibitem{raskhodnikova2013learning}
S.~Raskhodnikova and G.~Yaroslavtsev.
\newblock Learning pseudo-{Boolean} $k$-{DNF} and submodular functions.
\newblock In {\em Proceedings of the 24th Annual ACM-SIAM Symposium on Discrete
  Algorithms (SODA)}, pages 1356--1368, 2013.

\bibitem{Rodriguez2002}
J.~Rodr\'{i}guez.
\newblock On the {Laplacian} eigenvalues and metric parameters of hypergraphs.
\newblock {\em Linear and Multilinear Algebra}, 50(1):1--14, 2002.

\bibitem{Spielman:2011ir}
D.~A. Spielman and N.~Srivastava.
\newblock Graph sparsification by effective resistances.
\newblock {\em SIAM Journal on Computing}, 40(6):1913--1926, 2011.

\bibitem{Spielman:2011kia}
D.~A. Spielman and S.-H. Teng.
\newblock Spectral sparsification of graphs.
\newblock {\em SIAM Journal on Computing}, 40(4):981--1025, 2011.

\bibitem{Spielman:2014gq}
D.~A. Spielman and S.-H. Teng.
\newblock Nearly linear time algorithms for preconditioning and solving
  symmetric, diagonally dominant linear systems.
\newblock {\em SIAM Journal on Matrix Analysis and Applications},
  35(3):835--885, 2014.

\bibitem{Yoshida:2017zz}
Y.~Yoshida.
\newblock Cheeger inequalities for submodular transformations.
\newblock {\em arxiv}, 2017.

\bibitem{Zhang:2017va}
C.~Zhang, S.~Hu, Z.~G. Tang, and T.-H.~H. Chan.
\newblock Re-revisiting learning on hypergraphs: Confidence interval and
  subgradient method.
\newblock In {\em Proceedings of the 34th International Conference on Machine
  Learning (ICML)}, pages 4026--4034, 2017.

\end{thebibliography}

\appendix

\section{A Hypergraph That Effective Resistance Fails to Sparsify}
\label{sec:resistance}
We provide an example for which a natural sparsification strategy based on effective resistance fails.
More precisely, let us consider a sampling algorithm such that each hyperedge $e$ is sampled with probability $p_e := \max_{\pi} R_{G_\pi}(e_\pi)$, where $R_{G_\pi}(e_\pi)$ denotes the effective resistance between the endpoints of $e_\pi$ in the graph $G_\pi$ (see Section~\ref{sec:hypergraph} for the definitions of $G_\pi$ and $e_\pi$).
In the following, we show that $p_e \geq 1$ for any hyperedge $e$ in the worst case, which means that the algorithm does not sparsify it at all.

Let $r$ and $n$ be positive integers.
Let $V = \{s, t\} \cup U$, where $U$ is a finite set of size $n$.
Let $E = \{ \{s, t\} \cup X : X \subseteq U, \abs{X} = r \}$.
Fix an arbitrary hyperedge $e = \{s, t\} \cup X$.
Define $\bmx \in \bbR^V$ as follows:
\begin{align*}
        \bmx(v) :=
    \begin{cases}{}
        1 & v = s \\
        1 - \epsilon & v \in X \\
        1 + \epsilon & v \notin X \\
        0 & v = t,
    \end{cases}
\end{align*}
where $\epsilon \in (0,1)$ is a constant.
Let $\pi$ be an ordering of $V$ in the descending order of $\bmx$ (break ties arbitrary).
Then, $G_\pi$ has only one edge connecting $s$ and $t$, which is realized by $e$, and the other edges do not connect $s$ and $t$ in $G_\pi$.
Therefore, $R_{G_\pi}(e_\pi) = 1$ for this ordering $\pi$.
Since $e$ was taken to be arbitrary, we must have $\max_{\pi} R_{G_\pi}(e_\pi) \geq 1$ for any hyperedge $e$.

\end{document}